\documentclass{fundam}

\publyear{22}
\papernumber{2102}
\volume{185}
\issue{1}

\usepackage{url} 
\usepackage[ruled,lined,linesnumbered]{algorithm2e}
\usepackage{graphicx}

\usepackage{amssymb}

\usepackage[bookmarks=false]{hyperref}
\usepackage[capitalize, nameinlink, noabbrev]{cleveref}

\AtBeginEnvironment{lemma}{\crefalias{definition}{lemma}}
\AtBeginEnvironment{theorem}{\crefalias{definition}{theorem}}

\usepackage{macros}
\usepackage{subcaption}
\usepackage{multirow}
\usepackage{shuffle}

\definecolor{darkgreen}{rgb}{0.0, 0.5, 0.0}

\begin{document}

\title{Hierarchical Decomposition of Separable Workflow-Nets}


\author{Humam Kourani\corresponding\\
Fraunhofer Institute for Applied Information Technology FIT \\ 
Schloss Birlinghoven, 53757 Sankt Augustin, Germany\\
humam.kourani{@}fit.fraunhofer.de
\and Gyunam Park\\
Fraunhofer Institute for Applied Information Technology FIT  \\
Schloss Birlinghoven, 53757 Sankt Augustin, Germany
\and Wil M.P. van der Aalst\\
RWTH Aachen University \\
Ahornstraße 55, 52074 Aachen, Germany
} 

\maketitle

\runninghead{H. Kourani et al.}{Hierarchical Decomposition of Separable Workflow-Nets}

\begin{abstract}
   The Partially Ordered Workflow Language (POWL) has recently emerged as a process modeling notation, offering strong quality guarantees and high expressiveness. While early versions of POWL relied on strict block-structured operators for choices and loops, the language has recently evolved into POWL 2.0, introducing \emph{choice graphs} to enable the modeling of non-block-structured decisions and cycles. To bridge the gap between the theoretical advantages of POWL and the practical need for compatibility with established notations, robust model transformations are required. This paper presents a novel algorithm for transforming safe and sound workflow nets (WF-nets) into equivalent POWL 2.0 models. The algorithm recursively identifies structural patterns within the WF-net and translates them into their POWL representation. Unlike the previous approach that required separate detection strategies for exclusive choices and loops, our new algorithm utilizes choice graphs to capture generalized decision and cyclic patterns. We formally prove the correctness of our approach, showing that the generated POWL model preserves the language of the input WF-net. Furthermore, we prove the completeness of our algorithm on the class of \emph{separable WF-nets}, which corresponds to nets constructed via the hierarchical nesting of \emph{state machines} and \emph{marked graphs}. We evaluate our algorithm on large-scale process models to demonstrate its high scalability. Furthermore, to test its practical expressiveness, we applied it to a benchmark of 1,493 industrial and synthetic process models. Our algorithm successfully transformed all models in this benchmark, suggesting that POWL 2.0's expressive power, although a formal subclass of sound workflow nets, is generally sufficient to capture the complex logic found in real-world business processes. This work paves the way for broader adoption of POWL in practical process analysis and improvement applications.
\end{abstract}

\begin{keywords}
Workflow Net, Process Modeling, Model Transformation
\end{keywords}

\section{Introduction}\label{sec:intro}
The field of process modeling and analysis relies heavily on formal notations to represent and reason about the behavior of complex systems. While standard notations like Petri nets, and specifically workflow nets (WF-nets) \cite{DBLP:journals/eor/SalimifardW01}, and Business Process Model and Notation (BPMN) \cite{DBLP:books/el/15/RosingWCM15} have gained widespread adoption, they suffer from limitations in terms of their ability to guarantee desirable quality properties such as \textit{soundness} (i.e., the absence of deadlocks and other anomalies).

To address these limitations, hierarchical modeling languages have been introduced: \emph{process trees} \cite{DBLP:series/lnbip/Leemans22} and POWL \cite{DBLP:conf/bpm/KouraniZ23}. The hierarchical nature of these languages offers several advantages. First, these languages guarantee soundness by construction. Furthermore, the structured representation can significantly enhance the understandability of complex process models for humans, making it easier to grasp the overall control flow and identify potential areas for improvement. Finally, hierarchical modeling languages open up opportunities for developing faster and more efficient techniques for different process mining tasks. For example, previous work has demonstrated the advantages of POWL in process discovery from data \cite{DBLP:journals/is/KouraniZSA25} and process modeling from text \cite{DBLP:conf/bpmds/KouraniB0A24}.

Process trees construct models by recursively combining submodels using a set of block-structured operators, such as sequence ($\to$), exclusive choice ($\xor$), parallel execution ($\wedge$), and loop ($\Loop$). The Partially Ordered Workflow Language (POWL) \cite{DBLP:conf/bpm/KouraniZ23} was introduced as a generalization of process trees that preserves desirable properties while increasing expressiveness. POWL replaces the parallel operator with \emph{partial orders}. This allows for the representation of complex concurrency dependencies that do not fit into rigid blocks. The language has been further extended (\emph{POWL 2.0} \cite{DBLP:conf/bpm/KouraniPA25}) to replace the $\xor$ and $\Loop$ operators with \emph{choice graphs}, which allow for modeling arbitrary exclusive paths and cycles within a single hierarchical node. By combining partial orders (for generalized concurrency) and choice graphs (for generalized decisions and cycles), POWL 2.0 preserves the desirable properties of process trees while increasing expressiveness through more generalized structures. 

\cref{fig:ex} shows an example POWL 2.0 model (\cref{fig:ex:powl}) for a retailer’s order fulfillment process and a WF-net (\cref{fig:ex:wf}) that captures the same behavior.
The production subprocess follows non-block-structured concurrency where executing production depends on both gathering the required materials and scheduling, while notifying the customer is only constrained by scheduling. Simultaneously, the top-level control flow exhibits non-block-structured decision logic, encompassing the initial choice between in-stock and production paths, the choice between cancellation or shipment for in-stock orders, and the choice between ending the process or looping back to the initial state after a cancellation.



\begin{figure}[!t]
    \centering    
        \begin{subfigure}{\textwidth}
            \centering
            \includegraphics[width=\textwidth]{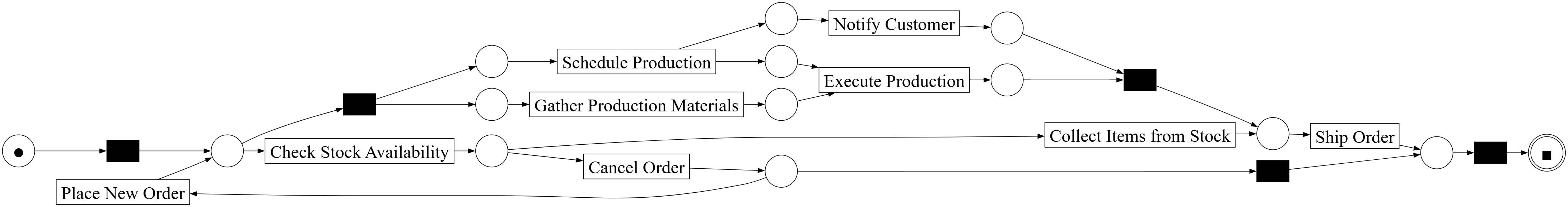}
            \caption{A WF-net.}\label{fig:ex:wf}
        \end{subfigure}
    
        \begin{subfigure}{\textwidth}
            \centering
            \includegraphics[width=0.8\textwidth]{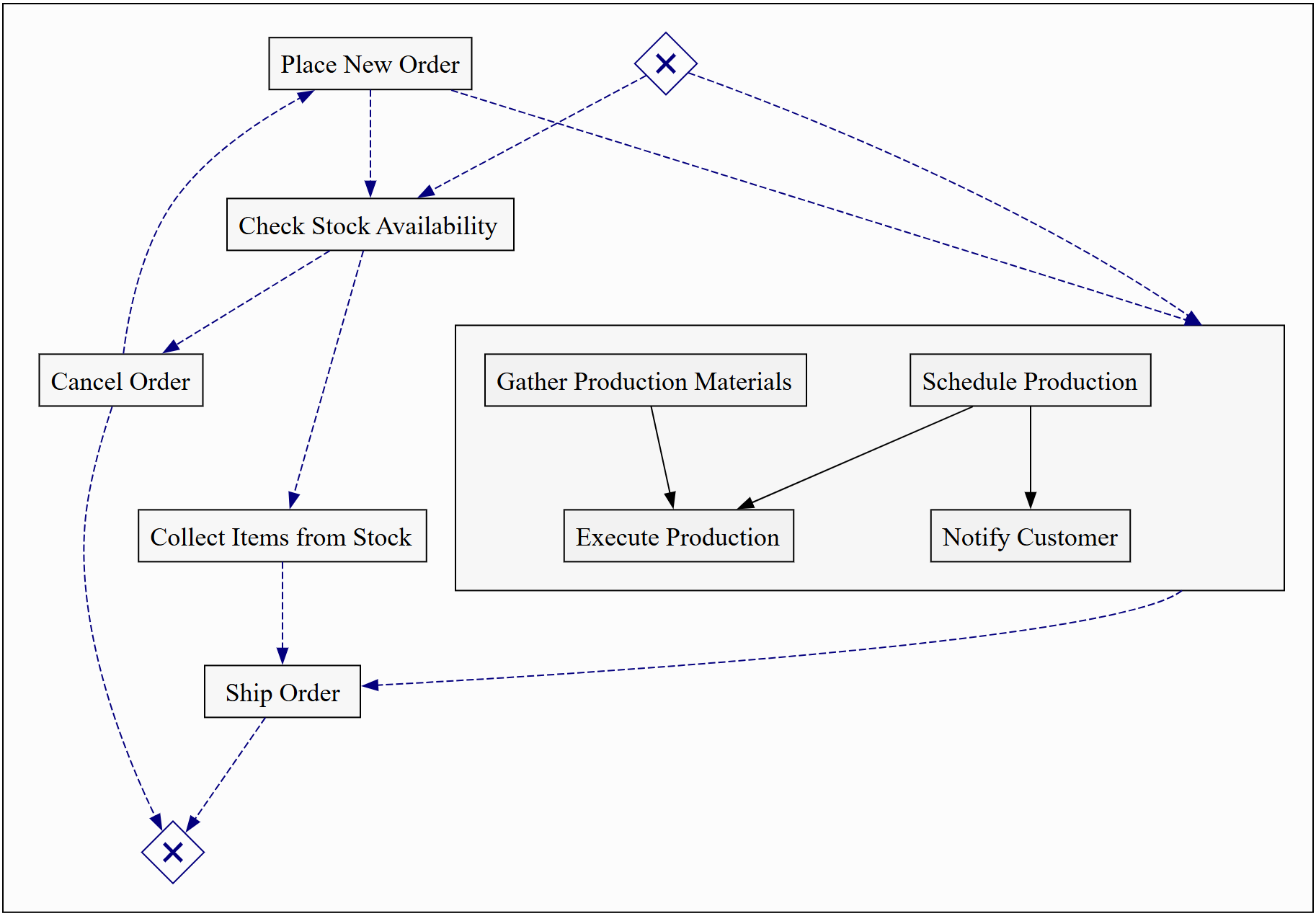}
            \caption{Modeling the same behavior in POWL 2.0. The top-level control flow is modeled using a \emph{choice graph}, which contains a nested \emph{partial order} for the production subprocess.}\label{fig:ex:powl}
        \end{subfigure}

        \caption{Example models for a retailer’s order fulfillment process. \label{fig:ex}}
\end{figure}

Despite its demonstrated benefits, the practical adoption of POWL is challenged by the prevalence of standard notations in existing tools and workflows. While optimized analysis and improvement functionalities have been developed for POWL (e.g., the automated process redesign feature in ProMoAI \cite{DBLP:conf/ijcai/KouraniB0A24}), they remain inaccessible to processes modeled using standard notations such as WF-nets. Furthermore, advanced process mining techniques can be adapted to fully leverage POWL's hierarchical nature (e.g., efficient conformance checking \cite{DBLP:conf/bpm/RochaA23}). This motivates the need for a robust conversion from standard notations to POWL, allowing us to harness the benefits of POWL without requiring the adjustment of existing tools and practices.

While the conversion from POWL to a sound WF-net is relatively straightforward \cite{DBLP:journals/is/KouraniZSA25}, the inverse transformation presents a more significant challenge. This asymmetry arises from the fact that POWL represents a subclass of sound WF-nets. In previous work \cite{DBLP:conf/apn/KouraniPA25}, we proposed an algorithm to translate WF-nets into POWL. The proposed algorithm recursively decomposes the input WF-net into its smaller parts, identifies XOR, Loop, and partial order patterns, and assembles them into an equivalent POWL model. However, that approach was designed for the initial variant of POWL and inherited its limitations: it cannot handle WF-nets with complex, unstructured decision and cyclic logic.

This paper proposes a generalized algorithm that translates sound WF-nets into POWL 2.0. By leveraging the unified structure of choice graphs, our new approach simplifies the transformation process: rather than searching for disparate choice and loop blocks, we recursively decompose the input WF-net into structural patterns corresponding to either choice graphs (covering all exclusive/cyclic behavior) or partial orders (covering all concurrent behavior). We formally prove the correctness of this transformation, demonstrating that the generated POWL 2.0 model captures the language of the original WF-net. Furthermore, we formally characterize the class of \emph{separable WF-nets}, workflow nets constructed by recursively substituting transitions with subnets that behave locally as either \emph{state machines} (capturing generalized choice and cyclic logic) or \emph{marked graphs} (capturing generalized concurrent logic). We prove that our algorithm is complete for this class, meaning it successfully transforms any safe and sound separable WF-net into its equivalent POWL 2.0 representation.

To illustrate the expressive power of POWL 2.0 and the effectiveness of our transformation, consider the retailer's order fulfillment process shown in \cref{fig:ex:wf}. This WF-net captures a process characterized by non-block-structured concurrency and decision logic. As shown in \cref{fig:ex:powl}, our proposed algorithm successfully encapsulates the complex top-level routing logic into a choice graph, while modeling the internal production dependencies through a nested partial order. 

The effectiveness of our approach is evaluated through a two-fold analysis. First, we evaluate the robustness of the algorithm using a collection of 1,493 industrial and synthetic workflow nets. Given that POWL 2.0 represents a subclass of sound WF-nets, this assessment allows us to measure the practical coverage of the language, specifically, whether its expressive power is sufficient to represent the diverse logic found in real-world business processes. Second, we assess scalability on large-scale models to ensure that the recursive decomposition remains efficient as process size increases. Our results demonstrate the effectiveness of our approach at successfully capturing the complex structures found in practice while maintaining high performance.  This work represents a significant step towards enabling the development of process analysis and improvement techniques that utilize POWL internally while seamlessly accepting inputs in widely used formats.

The remainder of the paper is structured as follows. \cref{sec:rel} discusses related work. \cref{sec:pre} introduces necessary preliminaries. In \cref{sec:convert}, we detail our algorithm for translating WF-nets into POWL 2.0. \cref{sec:gua} provides formal proofs for the algorithm's correctness and completeness. Finally, we evaluate our approach in \cref{sec:eval} and conclude in \cref{sec:conc}.

\section{Related Work}\label{sec:rel}
Transformations between process modeling notations have been explored in various contexts. Some research focuses on transformations between different Petri net classes, such as the work on unfolding Colored Petri Nets into standard Place/Transition nets in \cite{DBLP:conf/apn/Dal-Zilio20} and the work on reducing free-choice Petri nets to either T-nets (also called marked graphs) or P-nets (also called state machines) in \cite{DBLP:conf/apn/Aalst21}. Other research addresses transformations between different types of process models. For example, \cite{DBLP:journals/infsof/DijkmanDO08} proposes an approach for converting BPMN models into Petri nets, \cite{gardner2003uml} discusses translating UML diagrams into BPEL, and \cite{DBLP:conf/apn/LangnerSW98} explores the mapping of Event-driven Process Chains (EPCs) into colored Petri nets. An overview on different approaches for the translation between workflow graphs and free-choice workflow nets is provided in \cite{DBLP:journals/is/FavreFV15}.

Transformations from graph-based formalisms like Petri nets into block-structured languages such as BPEL or process trees have been widely studied. The work on translating WF-nets to BPEL in \cite{DBLP:journals/infsof/AalstL08,DBLP:conf/otm/LassenA06} employs a bottom-up strategy, iteratively identifying patterns corresponding to BPEL fragments and substituting each identified pattern with a single transition to continue the recursion. This approach aims to maximize the size of the detected components in each iteration. The approach presented in \cite{DBLP:journals/algorithms/ZelstL20} for translating WF-nets into process trees, while also employing a bottom-up strategy, restricts the search space to patterns of size two. This approach cannot be adapted to POWL due to the presence of advanced constructs that cannot be decomposed into components of size two.

The fundamental difference between our algorithm and the aforementioned approaches, besides the choice of the target language, lies in the search strategy. While existing methods rely on bottom-up pattern matching and the iterative substitution of local fragments, our approach employs a top-down decomposition strategy. It starts with the entire WF-net and recursively partitions it into smaller sub-nets by identifying high-level structural patterns. By analyzing the global structure of the net rather than performing an exhaustive search for small, localized components, we aim at achieving higher scalability on large-scale models.

\section{Preliminaries}\label{sec:pre}
This section introduces fundamental preliminaries and notations.

\subsection{Basic Notations}\label{sec:pre:notation}

A \emph{multi-set} generalizes the notion of a set by tracking the frequencies of its elements. A multi-set over a set $X$ is expressed as $M = [{x_{1}}^{c_1},..., {x_{n}}^{c_n}]$ where $x_{1},..., x_{n} \in X$ are the elements of $M$ (denoted as $x_{i} \in M$ for $1\leq i\leq n$) and $M(x_i) = c_i \geq 1$ is the frequency of $x_{i}$ for $1\leq i\leq n$. 

A sequence of length $n \geq 0$ over a set $X$ is defined as function $\sigma{\colon}\{1,...,n\} \to X$, and we express it as $\sigma = \langle \sigma(1), ..., \sigma(n)\rangle$. The set of all sequences over $X$ is denoted by $X^{*}$. The concatenation of two sequences $\sigma_1$ and $\sigma_2$ is expressed as $\sigma_1 \cdot \sigma_2$, e.g., $\langle x_{1} \rangle \cdot \langle x_{2}, x_{1} \rangle = \langle x_{1}, x_{2}, x_{1} \rangle$. For two sets of sequences $L_1$ and $L_2$, we write $L_1 \cdot L_2 = \{\sigma_1 \cdot \sigma_2 \ | \ \sigma_1\in L_1 \ \wedge \ \sigma_2\in L_2\}$. 

Let ${\po \subseteq X\times X}$ be a binary relation over a set $X$. We use ${x_1 \po x_2}$ to denote ${(x_1,x_2)\in \po}$ and ${x_1 \notpo x_2}$ to denote ${(x_1,x_2) \notin \po}$. We define the \emph{transitive closure} of $\po$ as $\closure\po = \{(x, y) \mid \exists_{x_1, \dots, x_n \in X} \ x = x_1 \ \wedge \ y = x_n \ \wedge \forall_{1\leq i<n} \; x_i \po x_{i+1}\}$.

A \emph{strict partial order} (\emph{partial order} for short) over a set $X$ is a binary relation that is \emph{irreflexive} ($x \notpo x$ for all ${x\in \X}$) and \emph{transitive} (${x_1 \po  x_2} \wedge {x_2 \po  x_3} \Rightarrow {x_1 \po  x_3}$). Irreflexivity and transitivity imply \emph{asymmetry} (${x_1 \po  x_2} \Rightarrow {x_2 \notpo  x_1}$). 



For a set $X$, a \emph{partition} of $X$ of size $n \geq 1$ is a set of subsets $P = \{X_1, ... , X_n\}$ such that $X = X_1 \cup ... \cup X_n$, $X_i \neq \emptyset$, and $X_i \cap X_j = \emptyset$ for $1 \leq i < j \leq n$. For any $x  \in X$, we write $P_x$ to denote the subset of the partition (also called \emph{part}) that contains $x$, i.e., $P_x \in \{X_1, ... , X_n\}$ and $x \in P_x$. For example, let $P = \{\{a, b\}, \{c\}\}$ be a partition of $\{a, b, c\}$ of size $2$. Then, $P_a = P_b = \{a, b\}$ and $P_c = \{c\}$. 

Let $f: X \rightarrow Y$ be a function. $f$ is called \emph{injective} if no different elements of $X$ are mapped into the same element of $Y$; i.e., $\forall_{x,x' \in X, \; x\neq x'} f(x) \neq f(x')$. $f$ is called \emph{surjective} if it covers all elements of $Y$; i.e., $\forall_{y \in Y} \exists_{x \in X} f(x) = y$. $f$ is called \emph{bijective} if it is both injective and surjective. For two sets $X$ and $Y$, $\setBIs(X, Y) = \{f: X \rightarrow Y \ | \ \text{f is bijective}\}$ denotes the set of all bijective functions from $X$ to $Y$.

\subsection{Workflow Nets}

We use $\ActUniverse$ to denote the set of all activities. We use $\tau \notin \ActUniverse$ to denote the \emph{silent activity}, which is used to model a choice between executing or skipping a path in process model, for example. To enable creating process models with duplicated activities, we introduce the notion of \emph{transitions}, and we use $\TraUniverse$ to denote the set of all transitions. Each transition is mapped to an activity, denoted as the $label$ of the transition. We use $\lab{\colon} \TraUniverse \to \ActUniverse \cup \{\tau\}$ to denote the labeling function.

A \textit{Petri net} is a directed bipartite graph consisting of two types of nodes: \textit{places} and \textit{transitions}. Transitions represent instances of activities, while places are used to model dependencies between transitions.

\begin{definition}[Petri Net]\label{def:petrinet}
A Petri net is a triple $N = (P, T, F)$, where $T \subset \TraUniverse$ is a finite set of transitions, $P$ is a finite set of places such that $T \cap P = \emptyset$, and $F \subseteq (P \times T) \cup (T \times P)$ is the flow relation.  
\end{definition}

Let $N=(P,T, F)$ be a Petri net. We define the following notations: 
\begin{itemize}
    \item The \emph{transition reachability relation} $\TTR \subseteq T \times T$ is defined as follows for $t, t' \in T$: $t \TTR t' \iff (t, t') \in \closure{F}$. 
    \item For $x \in P \cup T$, $\pre{x} = \{y \ | \ (y, x) \in F\}$ is the \emph{pre-set} of $x$, and $\post{x} = \{y \ | \ (x, y) \in F\}$ is the \emph{post-set} of $x$.
    \item For $T' \subseteq T$, we define the \emph{projection} of $P$ on $T'$ as $P\project{T'} = \{ p \in P \ | \ (\pre{p} \cup \post{p}) \cap T' \neq \emptyset\}$.
    \item For $P' \subseteq P$ and $T' \subseteq T$, we define the \emph{projection} of $F$ on $P$ and $T$ as $F\project{P',T'} = F \cap ((P' \times T') \cup (T' \times P'))$.
\end{itemize}

To reason about the structural equivalence of process models and to ensure the termination of our recursive decomposition, we introduce the formal notion of \emph{isomorphism} between Petri nets. Isomorphism captures the idea that two Petri nets are identical in their structure and labeling, differing only in the identities of their internal places and transitions.

\begin{definition}[Petri Net Isomorphism]\label{def:isomorphism}
Let $N_1 = (P_1, T_1, F_1)$ and $N_2 = (P_2, T_2, F_2)$ be two Petri nets. $N_1$ and $N_2$ are \emph{isomorphic}, denoted $N_1 \cong N_2$, if there exist two bijections $f_P \colon P_1 \to P_2$ and $f_T \colon T_1 \to T_2$ such that:
\begin{enumerate}
    \item $\forall p \in P_1, t \in T_1 \colon (p, t) \in F_1 \iff (f_P(p), f_T(t)) \in F_2$,
    \item $\forall t \in T_1, p \in P_1 \colon (t, p) \in F_1 \iff (f_T(t), f_P(p)) \in F_2$,
    \item $\forall t \in T_1 \colon \lab(t) = \lab(f_T(t))$.
\end{enumerate}
\end{definition}

Places hold \emph{tokens}, and a transition is considered \emph{enabled} if each of its preceding places has at least one token. \emph{Firing} an enabled transition consumes one token from each of its preceding places and produces a token in each of its succeeding places. A \emph{marking} is a multi-set of places indicating the number of tokens in each place. A Petri net is called \emph{safe} if each place in the net cannot hold more than one token.

In the context of process modeling, we focus on a specific subclass of Petri nets designed to represent processes with a well-defined beginning and end, known as \emph{workflow nets}.

\begin{definition}[Workflow Net]\label{def:wf-net}
Let $N=(P,T, F)$ be a Petri net. $N$ is a workflow net (WF-net) iff places $N_{source}, N_{sink}\in P$ exist such that:
    \begin{itemize}
        \item \textbf{Unique source:} $\{N_{source}\} = \{p \in P \ | \ \pre{p} = \emptyset\}$.
        \item \textbf{Unique sink:} $\{N_{sink}\} = \{p \in P \ | \ \post{p} = \emptyset\}$.
        \item \textbf{Connectivity:} each node is on a path from $N_{source}$ to $N_{sink}$.
    \end{itemize}
\end{definition}

For a workflow net $N$, we use $N_{source}$ to denote the unique source place and $N_{sink}$ to denote the unique sink place. 

WF-nets provide the structural basis for modeling business processes. However, to ensure that the behavior of the net is predictable and well-suited for structured analysis, further constraints on the interaction between choice and synchronization may be required. The \emph{free-choice} property is a widely used condition that ensures that if a transition is enabled, the choice to fire it depends only on the state of its input places, rather than on synchronization constraints with other parallel branches.

\begin{definition}[Free-Choiceness]\label{def:free-choice}
A Petri net $N=(P,T, F)$ is \emph{free-choice} iff for any two transitions $t_1, t_2 \in T$: 
\[
(\pre{t_1} \cap \pre{t_2} \neq \emptyset) \Rightarrow (\pre{t_1} = \pre{t_2}).
\]
\end{definition}

WF-nets may suffer from quality anomalies (e.g., transitions that can never be enabled). WF-nets without such undesirable properties are called \emph{sound}.

\begin{definition}[Soundness]\label{def:sound}
Let $N=(P,T, F)$ be a WF-net. $N$ is sound iff the following conditions hold:
\begin{itemize}
    \item \textbf{No dead transitions:} for each transition $t\in T$, there exists a marking $M$ reachable from $[N_{source}]$ that enables $t$.
    \item \textbf{Option to complete:} for every marking $M$ reachable from $[N_{source}]$, there exists a firing sequence leading from $M$ to $[N_{sink}]$. 
    \item \textbf{Proper completion:} $[N_{sink}]$ is the only marking reachable from $[N_{source}]$ with at least one token in $N_{sink}$. 
\end{itemize}
\end{definition}

The WF-net shown in \cref{fig:ex:wf} is sound. Note that the option to complete implies proper completion.

\subsection{POWL Language}\label{sec:powl}
A POWL model is a hierarchical process representation constructed recursively from a set of activities. In its original formulation \cite{DBLP:conf/bpm/KouraniZ23}, the language relied on combining submodels either as partial orders or using the control flow operators $\xor$ and $\Loop$. The operator $\xor$ models a decision between alternative branches, while $\Loop$ captures cyclic behavior between two submodels: the \textit{do-part} is executed first, and each time the \textit{redo-part} is executed, it is followed by another execution of the do-part. In a partial order, all submodels are executed, while respecting the given execution order.

POWL 2.0 generalizes the concepts of choice and loops into a unified structure called a \emph{choice graph}. A choice graph is a directed graph that models arbitrary exclusive paths through a set of submodels. By replacing the rigid $\xor$ and $\Loop$ operators with choice graphs, POWL 2.0 retains the hierarchical advantages of the original language while significantly expanding its expressiveness to cover non-block-structured workflow nets.

\subsubsection{Choice Graphs}
A choice graph consists of a set of nodes connected by directed edges such that each node lies on a path from a designated start node to a designated end node.

\begin{definition}[Choice Graph]\label{def:choice_graph}
A \emph{choice graph} over a set of nodes $X$ is a tuple $\cg = (N, E)$ where:
\begin{itemize}
    \item $N = X \cup \{\xorstart{N}, \xorend{N}\}$ with two artificial start and end nodes $\xorstart{N}, \xorend{N} \notin X$.
    \item $E \subseteq N \times N$ is a binary relation over $N$.
    \item $\xorstart{N}$ is the unique start node, i.e., $\{\xorstart{N}\} = \{x \in N \ | \ (N \times \{x\}) \cap E = \emptyset\}$.
    \item $\xorend{N}$ is the unique end node, i.e., $\{\xorend{N}\} = \{x \in N \ | \ ( \{x\} \times N) \cap E = \emptyset\}$.
    \item Every node is on a connected path from $\xorstart{N}$ to $\xorend{N}$.
\end{itemize}
\end{definition}

A choice graph represents a collection of possible execution paths, each beginning at the start node and ending at the end node. Formally, for a choice graph $\cg = (N, E)$ over a set $X$, we define the set of all paths in $\cg$ as follows:
\[
\paths{\cg} = \{ \langle x_1, ..., x_n \rangle \in X^* \ | \ (\xorstart{N}, x_1), (x_1, x_2), ..., (x_{n-1}, x_n), (x_n, \xorend{N}) \in E\}.
\]

Note that the start and end nodes are structural delimiters and are not included in the execution sequence $X^*$.

\subsubsection{Syntax and Semantics}
We now present the formal syntax and semantics of POWL 2.0. While our definition aligns with the one introduced in \cite{DBLP:conf/bpm/KouraniPA25}, we define POWL models over transitions rather than activities. This modification is necessary to distinguish between multiple instances of the same activity label within a single process model, thereby allowing for the representation of duplicate activities.

\paragraph*{Notation.} To facilitate treating partial orders and choice graphs as abstract structures that can be applied to any set of nodes, we introduce the following notation. Let \( n \geq 2 \) and \( X = \{x_1, \dots, x_n\} \) be a set of \( n \) elements.
\begin{itemize}
    \item \( \Orders{n} \) denotes the set of all partial orders over \( \{1, \dots, n\} \).
    
    \item For a partial order \( \po \in \Orders{n} \), we use \( \po(x_1, \dots, x_n) \) to denote the partial order \( \po' \) over \( X \) defined as follows: $i \po j \Leftrightarrow x_i \po' x_j \text{ for all } i, j \in \{1, \dots, n\}$.
    
    \item \( \Graphs{n} \) denotes the set of all choice graphs over \( \{1, \dots, n\} \).
    
    \item For a choice graph \( \cg = (N, E) \in \Graphs{n} \), \( \cg(x_1, \dots, x_n) \) denotes the choice graph \( \cg' = (N', E') \) over \( X \) where $N' = X \cup \{\xorstart{}, \xorend{}\}$ and $E' = \{(f(u), f(v)) \mid (u, v) \in E\}$ with a mapping function $f$ defined as $f(i) = x_i$ for $i \in \{1, \dots, n\}$, $f(\xorstart{}) = \xorstart{}$, and $f(\xorend{}) = \xorend{}$.
\end{itemize}

\begin{definition}[POWL Model]\label{def:powl} 
POWL models are defined as follows:
\begin{itemize}
    \item Any transition ${t \in \TraUniverse}$ is a POWL model.
    \item Let $\powl_1, ..., \powl_n$ be $n \geq 2$ POWL models.
    \begin{itemize}
        \item For a partial order $\po \in \Orders{n}$, $\po(\powl_1, ..., \powl_n)$ is a POWL model.
        \item For a choice graph $\cg \in \Graphs{n}$, $\cg(\powl_1, ..., \powl_n)$ is a POWL model.
        
    \end{itemize}
\end{itemize}
\end{definition}

The semantics of a POWL model are defined by the set of possible execution sequences it generates, referred to as its \textit{language}. This language is defined recursively based on the structure of the POWL model. The language of a choice graph corresponds to the union of sequences generated by concatenating the languages of submodels along every valid path in the graph. Conversely, the language of a partial order corresponds to all possible interleavings that satisfy the specified partial order constraints.

Let \( \sigma_1, \dots, \sigma_n \in X^* \) be sequences over a set \( X \) with \( n \geq 2 \), and let \( \po \in \Orders{n} \). The \emph{order-preserving shuffle operator} \( \shuffle_{\po} \) produces the set of sequences obtained by interleaving \( \sigma_1, \dots, \sigma_n \) while maintaining both the partial order \( \po \) among the sequences and the inherent sequential order within each sequence. For example, consider the sequences \( \sigma_1 = \langle a, b\rangle \), \( \sigma_2 = \langle c\rangle \), \( \sigma_3 = \langle d, e\rangle \), and the partial order \( \po = \{(1, 2), (1, 3)\} \in \Orders{3} \). Then, the set of valid interleavings is $\shuffle_{\po}(\sigma_1, \sigma_2, \sigma_3) = \{\langle a, b, c, d, e\rangle, \langle a, b, d, c, e\rangle, \langle a, b, d, e, c\rangle\}$.

\begin{definition}[Order-Preserving Shuffle Operator]\label{def:shuffle}
Let \( \sigma_1, \dots, \sigma_n \in X^* \) be sequences over a set \( X \) with \( n \geq 2 \), and let \( \po \in \Orders{n} \). We define the set of all indexed positions in the input sequences as:
\[ I = \{ (j, k) \mid 1 \leq j \leq n \ \wedge \ \ 1 \leq k \leq |\sigma_j| \}. \]
Then the order-preserving shuffle operator \( \shuffle_{\po} \) is defined as follows: \\
$\shuffle_{\po}(\sigma_1, \dots, \sigma_n ) =   \{\sigma \in X^* \ | \ |\sigma| = |I| \wedge \  \mathop{\exists}\limits_{f \in \setBIs(I, \{1, \dots, |\sigma|\})} \, \Big($ \\ 
$\hspace*{1cm} \mathop{\forall}\limits_{(j,k) \in I} \sigma(f(j, k)) = \sigma_j(k)$ \\
$\hspace*{1cm}  \wedge \ \mathop{\forall}\limits_{(j_1,k_1), (j_2,k_2) \in I} (j_1 \po j_2 \vee (j_1=j_2 \wedge k_1 < k_2)) \Rightarrow f(j_1,k_1) < f(j_2,k_2) \Big) \}.$
\end{definition}

\begin{definition}[POWL 2.0 Semantics]\label{def:ext_sem}
The language of a POWL model is recursively defined as follows:

\begin{itemize}
    \item ${\lang(t) = \{\langle a \rangle\}}$ for $t \in \TraUniverse$ with $\lab(t) = a \in \ActUniverse$.
    \item ${\lang(t) = \{\langle \rangle\}}$ for $t \in \TraUniverse$ with $\lab(t) = \tau$.
    \item Let $\powl_1, ..., \powl_n$ be ${n \geq 2}$ POWL models.
    \begin{itemize}     
         \item For $\po \in \Orders{n}$, $\lang(\po(\powl_1, ..., \powl_n)) = \{\sigma \in \shuffle_{\po}(\sigma_1 , ..., \sigma_n) \ | \ \forall_{1 \leq i \leq n} \ \sigma_i \in \lang(\powl_i) \}$.   

         \item For $\cg \in \Graphs{n}$, $\lang(\cg(\powl_1, ..., \powl_n)) = \bigcup_{\langle i_1, ..., i_k \rangle \in \paths{\cg}} \lang(\powl_{i_1}) \cdot ... \cdot \lang(\powl_{i_k})$.
    \end{itemize}
\end{itemize}
\end{definition}

\cref{fig:ex:powl} shows an example POWL model. This model is defined by a top-level choice graph that captures the complex decision and cyclic logic of the process. Nested within this structure is a partial order submodel that enforces causal dependencies between the concurrent production tasks. This hierarchical nesting allows POWL to accurately represent non-block-structured behavior while maintaining soundness by construction.

\subsection{Scope of Translation: Separable WF-nets}\label{sec:separable_wf}
POWL 2.0 is highly expressive, yet it represents a specific subclass of sound workflow nets. To clearly define the scope of our translation algorithm and the class of nets for which we guarantee completeness, we analyze the structural correspondence between POWL 2.0 constructs and Petri nets. 

POWL 2.0 constructs map to two fundamental subclasses of Petri nets: \emph{state machines} and \emph{marked graphs}. A choice graph is structurally equivalent to a sound state machine fragment. In a state machine, every transition has at most one input and one output place, meaning all flow divergences are decision points. A partial order is structurally equivalent to a sound marked graph. In a marked graph, every place has at most one input and one output transition, meaning all flow divergences are parallel splits/joins.

\begin{definition}[State Machine]
Let $N=(P,T, F)$ be a Petri net. $N$ is a \emph{state machine (SM)} iff every transition $t \in T$: 
\[
|\pre{t}| \leq 1 \quad \text{and} \quad |\post{t}| \leq 1.
\]
\end{definition}

\begin{definition}[Marked Graph]
Let $N=(P,T, F)$ be a Petri net. $N$ is a \emph{marked graph (MG)} iff every place $p \in P$: 
\[
|\pre{p}| \leq 1 \quad \text{and} \quad |\post{p}| \leq 1.
\]
\end{definition}

More details on these classes and their properties can be found in established literature, such as \cite{DBLP:conf/ac/1996petri1,desel1995free}.

Since a POWL 2.0 model is a hierarchical composition of these constructs, the corresponding WF-net is a hierarchy of SESE (Single-Entry-Single-Exit) fragments, where each fragment behaves locally as either a state machine or a marked graph. To formalize the nesting of sub-processes, we first introduce the formal notion of \emph{substitutive composition}. This operation allows for the recursive substitution of a transition with an entire sub-net.

\begin{definition}[Substitutive Composition]
Let $N = (P, T, F)$ be a Petri net and $t \in T$. Let $N' = (P', T', F')$ be a WF-net such that $(P \cup T) \cap (P' \cup T') = \emptyset$. The \emph{substitutive composition} of $N$ with $N'$ at $t$, denoted as $N_{[t \rightarrow N']}$, is the Petri net $(P'', T'', F'')$ defined by: 
\begin{itemize}
    \item $P'' = P \cup (P' \setminus \{N'_{source}, N'_{sink}\})$,
    \item $T'' = (T \setminus \{t\}) \cup T'$,
    \item $F'' = \{(u, v) \in F \mid u \neq t \wedge v \neq t\} \cup \{(u, v) \in F' \mid u \neq N'_{source} \wedge v \neq N'_{sink} \}$ \\
    $\cup \{(p, t') \mid (p, t) \in F \wedge (N'_{source}, t') \in F'\} \cup \{(t', p) \mid (t, p) \in F \wedge (t', N'_{sink}) \in F'\}$.
\end{itemize}
\end{definition}

Based on this operation, we define the class of \emph{separable WF-nets}. This class is defined by the property that its members can be recursively constructed using state machine and marked graph templates. 

\begin{definition}[Separable WF-net]\label{def:separable}
The class of \emph{separable WF-nets} is defined as follows:
\begin{enumerate}
    \item Every marked-graph WF-net is separable.
    \item Every state-machine WF-net is separable. 
    \item Let $N = (P, T, F)$ and $N' = (P', T', F')$ be two separable WF-nets such that $(P \cup T) \cap (P' \cup T') = \emptyset$. For any $t \in T$, $N_{[t \rightarrow N']}$ is separable.
\end{enumerate}
\end{definition}

The WF-net shown in \cref{fig:ex:wf} is separable. In a separable WF-net, concurrency and choice logic are hierarchically nested but never cross-linked. This structure allows for arbitrary complexity \emph{within} a decision logic (e.g., unstructured loops and jumps via choice graphs) and arbitrary complexity \emph{within} a concurrent region (e.g., unstructured synchronization via partial orders). However, it does not support the arbitrary interleaving of decision and concurrency logic (e.g., a choice that depends on the state of a parallel branch without synchronization). For example, the WF-net shown in \cref{fig:ex:not_sep_wf} is not separable.

\begin{figure}[!t]
    \centering    
        \includegraphics[width=0.6\textwidth]{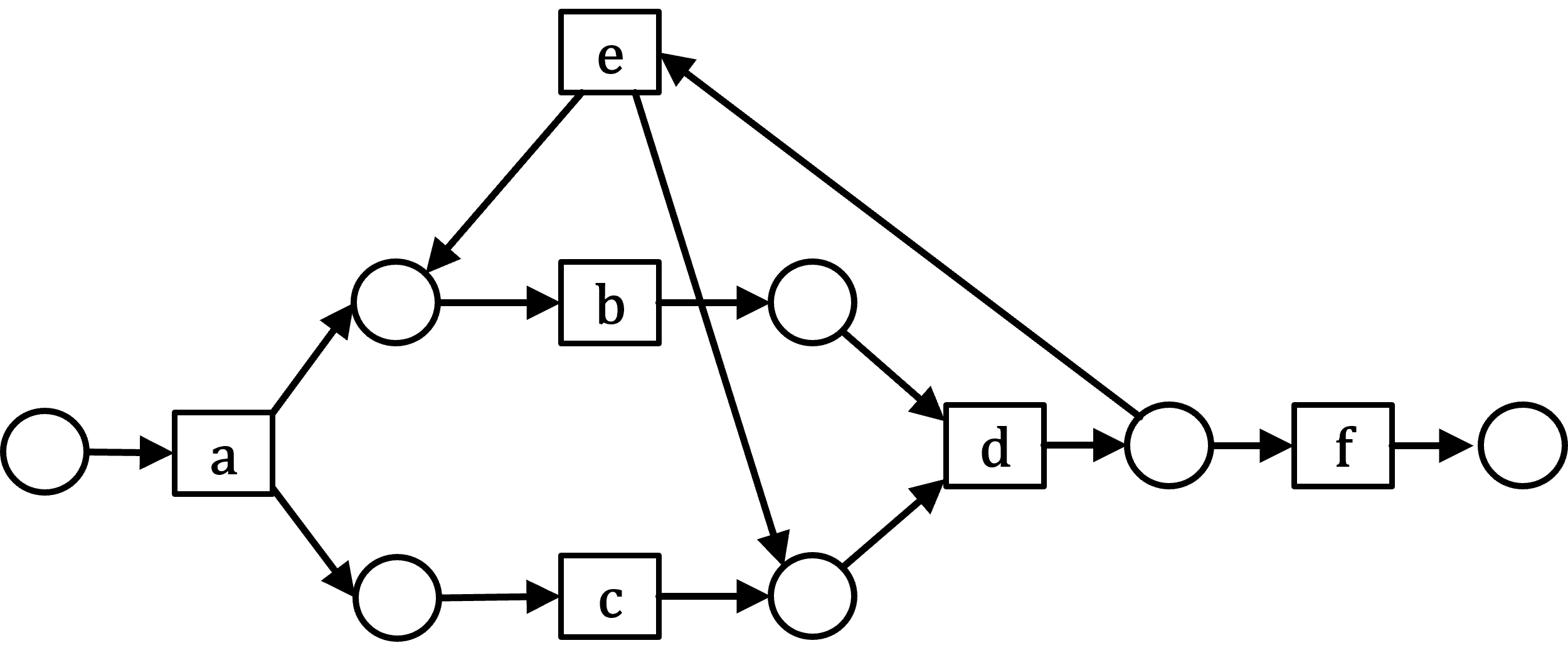}
        \caption{A free-choice WF-net that is not separable. \label{fig:ex:not_sep_wf}}
\end{figure}

The class of separable WF-nets represents a specific subset of well-structured workflow nets, distinguished by the following key properties:
\begin{itemize}
    \item \textbf{Subclass of Free-Choice WF-Nets:} State machines and marked graphs are both subclasses of free-choice Petri nets. The substitutive composition of free-choice SESE fragments preserves the free-choice property. Consequently, by construction, all separable WF-nets are inherently free-choice.
    
    \item \textbf{Absence of PT-Handles and TP-Handles:} The hierarchical structure of separable nets guarantees the absence of unstructured paths that create complex overlaps, often referred to as \emph{PT-handles} and \emph{TP-handles} \cite{DBLP:conf/apn/EsparzaS89}. A PT-handle occurs when two disjoint paths start at a common place and end at a common transition. Similarly, a TP-handle occurs when two disjoint paths start at a common transition and end at a common place. In separable nets, all splits and joins are strictly encapsulated within identified SESE blocks (corresponding to either state machines or marked graphs). This encapsulation ensures that no handles can cross the boundaries of these blocks in an unstructured manner. Therefore, all separable WF-nets are guaranteed to be free from PT-handles and TP-handles.
\end{itemize}

By construction, every POWL 2.0 model has an equivalent representation in the class of separable WF-nets. In \cref{sec:gua:redisc}, we show that our algorithm is complete for this class under the soundness and safeness requirement; i.e., we show the our algorithm successfully converts any sound and safe separable WF-net into a POWL 2.0 model that captures the same language.

\section{Transforming Workflow Nets into POWL}\label{sec:convert}
This section presents a generalized algorithm for transforming safe and sound WF-nets into equivalent POWL 2.0 models. Unlike our previous approach that relied on detecting specific block-structured patterns (XOR, Loop) \cite{DBLP:conf/apn/KouraniPA25}, our new approach leverages the unified expressiveness of POWL 2.0. The transformation is based on a recursive decomposition strategy that exploits the structural duality between the two main constructs of POWL 2.0: \emph{partial orders} and \emph{choice graphs}.

A POWL 2.0 model can be viewed as a hierarchy where each node organizes its children using either:
\begin{itemize}
    \item A partial order, which models concurrency and causal dependencies. To act as a valid parent node, a partial order structure must not contain external decision logic. Therefore, identifying this structure requires identifying and encapsulating all \emph{conflicts} (exclusive choices) into the child nodes.
    \item A choice graph, which models exclusive decisions and cycles. To act as a valid parent node, a choice graph structure must not contain external concurrency. Therefore, identifying this structure requires identifying and encapsulating all \emph{concurrency} (parallel splits and joins) into the child nodes.
\end{itemize}

Based on this observation, our algorithm recursively attempts to partition the WF-net into subnets by either ``hiding conflicts'' (to find a partial order) or ``hiding concurrency'' (to find a choice graph).

The remainder of this section is structured as follows. \cref{sec:preproc} outlines the optional preprocessing steps. \cref{sec:mine_po} defines the criteria and projection rules for identifying partial order structures. \cref{sec:mine_cg} defines the criteria and projection rules for identifying choice graph structures. Finally, \cref{sec:algo_main} presents the integrated recursive algorithm. 

\subsection{Preprocessing}\label{sec:preproc}
While our core conversion algorithm is designed for the class of separable WF-nets, many real-world models contain syntactic variations that obscure their underlying separable structure. To address this, we define an optional, but crucial \emph{preprocessing} phase that runs before the main conversion. This step applies a set of language-preserving reduction rules to transform a WF-net into a structurally equivalent, simpler form. The impact of this phase is significant, as it extends the practical scope of our approach.

Numerous reduction rules have been proposed in the literature, such as those presented in \cite{desel1995free,DBLP:conf/ac/Berthelot86,10.1007/BFb0016204,DBLP:journals/pieee/Murata89}. Any reduction rule can be applied as long as it preserves the essential structural properties of the WF-net, namely its language, safeness, and soundness. The specific reduction rules implemented in our approach are illustrated in \cref{fig:prepro}; they include the elimination of duplicate places and the introduction of explicit places to represent XOR-splits/joins. By applying these transformations, preprocessing can convert certain non-separable WF-nets into equivalent separable ones.

\begin{figure}[!t]
    \centering    
        \begin{subfigure}{0.48\textwidth}
            \centering
            \includegraphics[width=\textwidth]{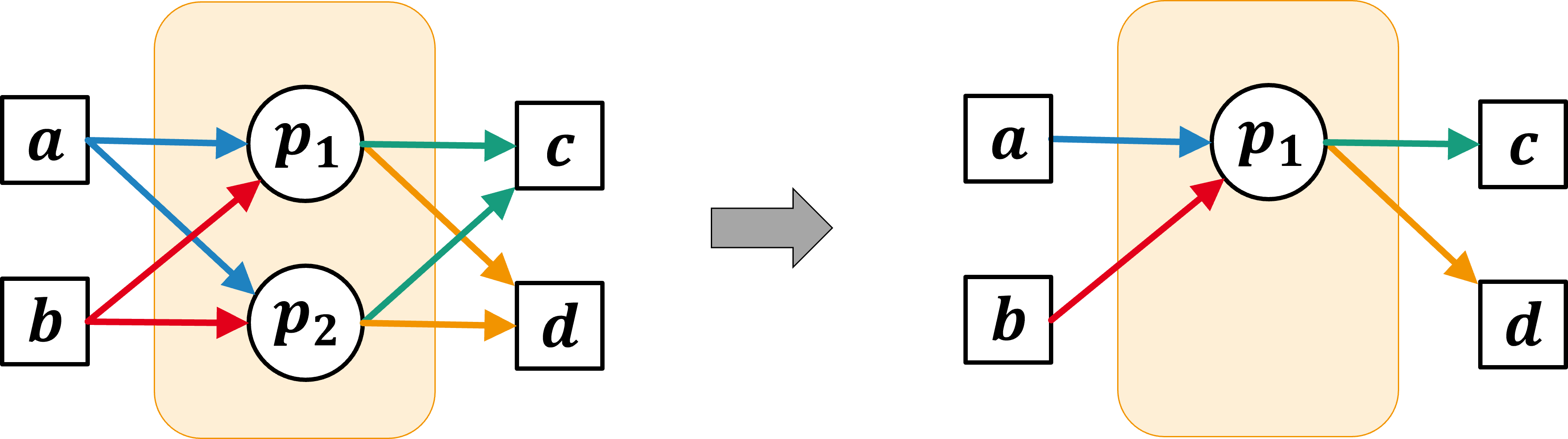}
            \caption{Filtering out duplicated places.}
        \end{subfigure}
        
        \begin{subfigure}{0.48\textwidth}
            \centering
            \includegraphics[width=\textwidth]{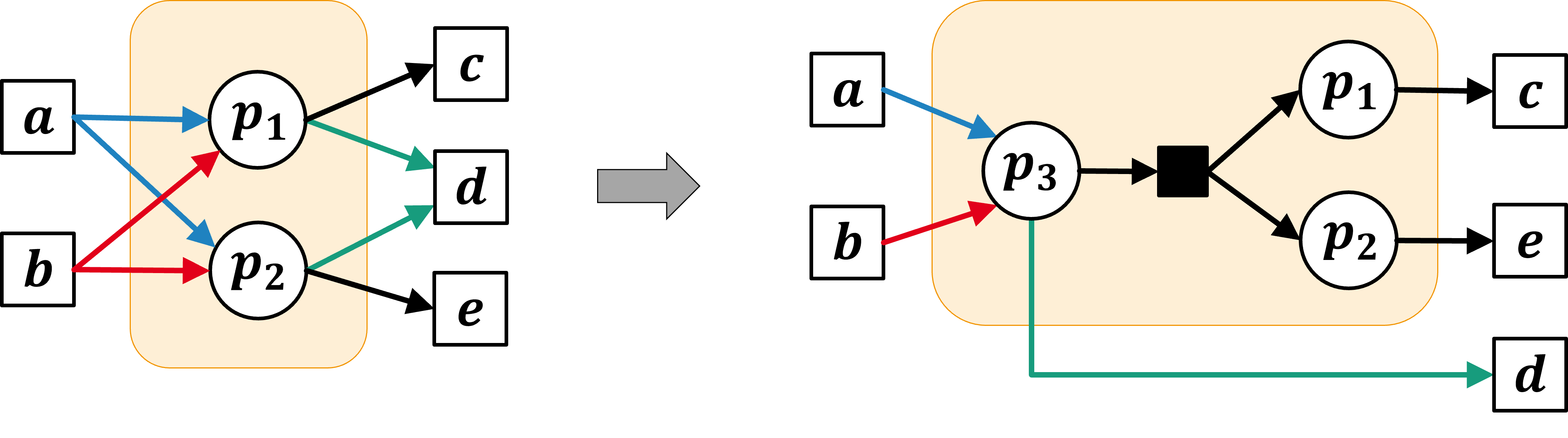}
            \caption{Introducing explicit XOR split places.}
        \end{subfigure}
        \begin{subfigure}{0.02\textwidth}
                $ $
        \end{subfigure}
        \begin{subfigure}{0.48\textwidth}
            \centering
            \includegraphics[width=\textwidth]{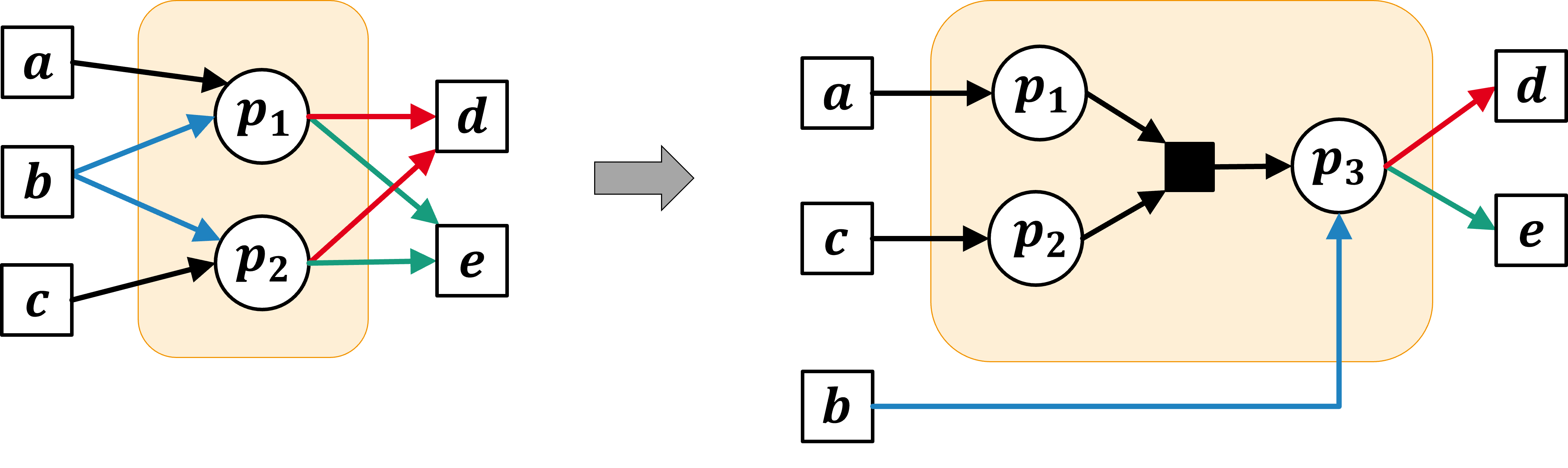}
            \caption{Introducing explicit XOR join places.}
        \end{subfigure}

        \caption{Reduction rules illustrated with examples.\label{fig:prepro}}
\end{figure}

\subsection{Mining Partial Orders}\label{sec:mine_po}
This section addresses partitioning a WF-net into smaller subnets such that the given WF-net corresponds to a partial order over the identified parts. 

To achieve this decomposition, our approach proceeds in three steps:
\begin{enumerate}
    \item \textbf{Detection:} We first identify a partition of the WF-net's transitions that hides all exclusive choices (\cref{sec:detect_po}).
    \item \textbf{Projection:} We then project the original net onto these identified parts to create valid sub-models for the next recursion step (\cref{sec:project_po}).
    \item \textbf{Composition:} Finally, we derive the execution order between these parts to define the top-level partial order structure (\cref{sec:order_po}).
\end{enumerate}

\subsubsection{Detecting Conflict-Hiding Partitions}\label{sec:detect_po}
Structurally, a partial order corresponds to a \emph{marked graph} pattern in Petri nets. In a marked graph, places serve solely as connectors that enforce causal dependencies; they do not introduce choices (XOR-splits with multiple outgoing arcs) nor do they merge alternative paths (XOR-joins with multiple incoming arcs). Therefore, to decompose a WF-net into a partial order, all decision logic must be fully encapsulated within the identified subnets.

To formalize these requirements, we define a \emph{conflict-hiding partition} as a grouping of transitions that enforces marked graph constraints on the interaction between distinct parts. Specifically, it ensures that no single place in the net acts as a bridge distributing tokens to multiple distinct parts or collecting tokens from multiple distinct parts. Furthermore, it ensures that the resulting parts represent SESE fragments, allowing them to be treated as atomic nodes within the hierarchy.

\paragraph*{Notation.}
Let $N=(P,T, F)$ be a WF-net. 
\begin{itemize}
    \item For $T' \subseteq T$, $\Pre{T'} = \{p \in P \ | \ T' \cap \post{p} \neq \emptyset \ \wedge \ (p = N_{source} \ \vee \ (T \setminus T') \cap \pre{p} \neq \emptyset)\}$ is the set of \emph{entry points} of $T'$.
    \item For $T' \subseteq T$, $\Post{T'} = \{p \in P \ | \ T' \cap \pre{p} \neq \emptyset \ \wedge \ (p = N_{sink} \ \vee \ (T \setminus T') \cap \post{p} \neq \emptyset)\}$ is the set of \emph{exit points} of $T'$.
    \item For $T' \subseteq T$, two places $p,p' \in P$ are \emph{equivalent with respect to $T'$}, denoted as $p\approx_{T'} p'$, iff $(\pre{p} \cap T' = \pre{p'} \cap T') \ \wedge \ (\post{p} \cap T' = \post{p'} \cap T')$.
\end{itemize}

\begin{definition}[Conflict-Hiding Partition]\label{def:po_pattern}
Let $N = (P, T, F)$ be a safe and sound WF-net. A partition $G = \{T_1, \dots, T_n\}$ of $T$ is \emph{conflict-hiding } iff the following conditions hold: 
\begin{enumerate}
    \item \textbf{No Top-Level XOR-Splits:} for every place $p \in P$:
    \[ |\{T_i \in G \mid p \in \Pre{T_i}\}| \leq 1. \]
    \item \textbf{No Top-Level XOR-Joins:} for every place $p \in P$:
    \[ |\{T_i \in G \mid p \in \Post{T_i}\}| \leq 1. \]
    \item \textbf{Single Entry Fragments:} for all $i \in \{1, \ldots, n\}$ and $p, p' \in \Pre{T_i}$: $p \approx_{T_i} p'$.
    \item \textbf{Single Exit Fragments:} for all $i \in \{1, \ldots, n\}$ and $p, p' \in \Post{T_i}$: $p \approx_{T_i} p'$.

\end{enumerate}
\end{definition} 

Conditions 1 and 2 enforce the marked graph structural duality at the top level. Conditions 3 and 4 require that all entry places ($\Pre{T_i}$) and exit places ($\Post{T_i}$) are equivalent with respect to $T_i$ (not leading to different internal behaviors within $T_i$). This ensures that the identified parts form cleanly separable components that can be executed independently with well-defined entry and exit points. \cref{fig:order} shows a conflict-hiding partition detected for an example WF-net.

\begin{figure}[!t]
    \centering    
     \includegraphics[width=\textwidth]{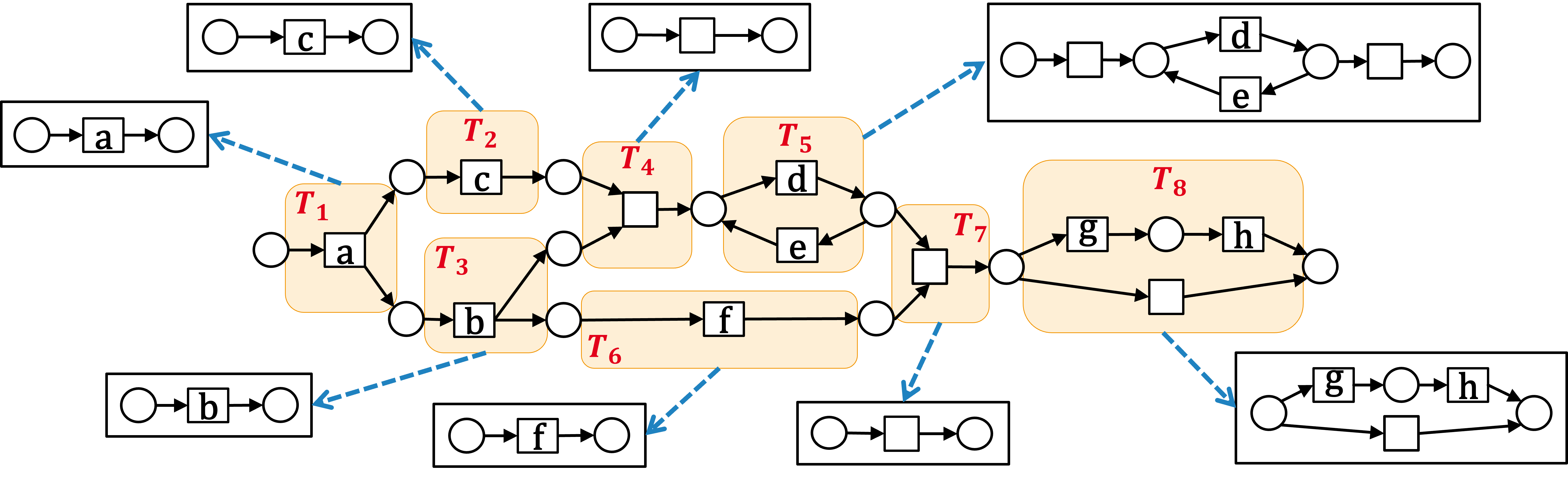}
    \caption{An example illustrating a conflict-hiding partition $\{T_1, \dots, T_8\}$ and the partial order projection of the WF-net on the identified parts.\label{fig:order}}
\end{figure}

To generate a partition that satisfies the requirements of \cref{def:po_pattern}, we employ a symmetric algorithmic strategy as outlined in \cref{alg:partial_order_detect}. The core reasoning is centered on identifying the ``conflict scope'' of every split and merge point. If a place $p$ has transitions reachable from one outgoing flow but not all others, then these transitions are considered \textit{exclusive} to the decision branch introduced by the place and must all be merged into the same component (lines 3-8). Similarly, if a place $p$ has transitions from which only specific incoming flows are reachable, these transitions are exclusive to that alternative merging path and must also be merged (lines 9-14). By applying this logic bidirectionally, the algorithm ensures that all decision logic is pulled together from both ends and aggregated into a single atomic part, thereby preserving the structural properties of a marked graph at the top level of the partition.




\begin{algorithm}[!t]
\caption{Conflict-Hiding Partitioning in a WF-Net ($\popart$).}\label{alg:partial_order_detect}
\DontPrintSemicolon
\KwIn{A safe and sound WF-net $N = (P, T, F)$.}
\KwOut{A partition  $G = \{T_1, \dots, T_n\}$ representing marked graph nodes.}
\SetKwFunction{FMain}{$\popart$}
\SetKwProg{Fn}{Function}{:}{}
\Fn{\FMain{$N$}}{
    $G \leftarrow \{\{t\} \ | \ t \in T\}$ 

    \tcp{Forward Analysis: XOR-Splits}
    \For{$p_{split} \in P$ such that $|\post{p_{split}}| > 1$}{
        $group \leftarrow {\{t \in T \ | \ \exists_{t_1, t_2 \in \post{_{split}}} \quad t_1 \TTR t \ \wedge \ t_2 \notTTR t \}}$\;

        \If{$\card{group} > 1$}{
            $G \leftarrow G \setminus \{G_t \mid t \in group\} \cup \big\{ \bigcup\limits_{t \in group} G_{t} \big\}$\;
        }
    }

    \tcp{Backward Analysis: XOR-Joins}
    \For{$p_{join} \in P$ such that $|\pre{p_{join}}| > 1$}{

        $group \leftarrow {\{t \in T \ | \ \exists_{t_1, t_2 \in \pre{p_{join}}} \quad t \TTR t_1 \wedge t \notTTR t_2 \}}$\;

        \If{$\card{group} > 1$}{
            $G \leftarrow G \setminus \{G_t \mid t \in group\} \cup \big\{ \bigcup\limits_{t \in group} G_{t} \big\}$\;
        }
    }

    \KwRet{$G$}
}
\end{algorithm}

The partition $\popart(N)$ returned by \cref{alg:partial_order_detect} is a conflict-hiding partition if it consists of at least two parts and the conditions of \cref{def:po_pattern} are met. 

\subsubsection{Projecting Child Models}\label{sec:project_po}
Once a valid partition is identified, the next step is to isolate the specific subnet corresponding to each part for recursive processing.

\paragraph{Normalization.} Before defining the projection for conflict-hiding partitions, we introduce the concept of \emph{normalization}. Normalization aims at ensuring conformance of the projected net with the requirements of WF-nets (cf. \cref{def:wf-net}) by adding new source and sink places if needed. Let $N$ be a Petri net with known unique start and end places $p_s \in P$ and $p_e \in P$, respectively. Then $P$ is normalized into a new Petri net $\mathit{Normalize}(N, p_s, p_e)$ by (i) inserting a new start place and connecting it to $p_s$ through a silent transition in case $\pre{p_s} \neq \emptyset$ and (ii) inserting a new end place and connecting it to $p_e$ through a silent transition in case $\post{p_e} \neq \emptyset$. 

After detecting a conflict-hiding partition, the WF-net is projected on the identified parts as illustrated in the example shown in \cref{fig:order}. This projection is done by selecting the appropriate subset of transitions, adding unique start and end places to represent the entry and exit points of the part, adjusting the flow relation accordingly, and applying normalization if needed.

\begin{definition}[Partial Order Projection]\label{def:projection_po}
Let $N = (P, T, F)$ be a safe and sound WF-net. Let $G$ be a conflict-hiding partition and $T' \in G$ be a part. The partial order projection of $N$ on $T'$ is $\poproject(N, T') = \mathit{Normalize}(N', p_s, p_e)$ where $p_s, p_e \notin P$ are two fresh places and $N' = (P', T', F')$ is constructed as follows:
\begin{itemize}
    \item $P' = (P\project{T'} \setminus (\Pre{T'} \cup \Post{T'})) \cup \{p_s, p_e\}$.
    \item $F' = F\project{P', T'}$\\ 
    $\cup \ \{(p_s, t) \ | \ \exists p \in \Pre{T'}: (p, t) \in F \} \ \cup \ \{(t, p_s) \ | \ \exists p \in \Pre{T'}: (t, p) \in F \}$\\
    $\cup \ \{(p_e, t) \ | \ \exists p \in \Post{T'}: (p, t) \in F \} \ \cup \ \{(t, p_e) \ | \ \exists p \in \Post{T'}: (t, p) \in F \}$.
\end{itemize}
\end{definition}

\subsubsection{Generating the Top-Level Partial Order}\label{sec:order_po}

After projecting the input WF-net $N$ onto each part $T_i$ and recursively converting it into a POWL submodel $\powl_i$, these child models must be combined into a single structure that mirrors the top-level behavior of the original net. 

Since the conflict-hiding partition ensures that the high-level control flow behaves like a marked graph, the dependencies between child models are derived by the flow of tokens through the places connecting them. If a place acts as an exit for $T_i$ and an entry for $T_j$, it establishes a direct causal link. Because the partition satisfies the unique local interface requirement, this dependency is uniform: any place in the intersection $(\Post{T_i} \cap \Pre{T_j})$ effectively acts as a synchronization barrier between the two sub-processes. By extracting these links and calculating their transitive closure, we derive the partial order operator that assembles the children into the final POWL model.

\begin{definition}[Execution Order]\label{def:execution_order}
Let $G = \{T_1, \dots, T_n\}$ be a conflict-hiding partition of a safe and sound WF-net $N$. The \emph{execution order} of $G$ within $N$ is the binary relation $\po = \mathit{order}(N, G)$ over the indices $\{1, \dots, n\}$ defined as follows:
\[
i \po j \Leftrightarrow (\Post{T_i} \cap \Pre{T_j}) \neq \emptyset.
\]
\end{definition}

For a valid conflict-hiding partition in a sound WF-net, the transitive closure of $\po = \mathit{order}(N, G)$ is guaranteed to be a strict partial order. This guarantee holds due to the absence of decision points in the top-level marked graph structure. The relationship between any two parts $T_i$ and $T_j$ is purely one of synchronization: if $i \po j$, part $T_j$ \emph{must wait} for part $T_i$ to complete before it can start. If the execution order contained a cycle of any length (e.g., $T_i \to \dots \to T_i$), it would imply a circular dependency where $T_i$ is effectively waiting for itself to finish before it can start. In a workflow net, such a circular wait constitutes a deadlock, violating the soundness assumption.

\subsection{Mining Choice Graphs}\label{sec:mine_cg}
This section addresses partitioning a WF-net into subnets such that the given WF-net corresponds to a choice graph over the identified parts.

To achieve this decomposition, our approach follows three steps:
\begin{enumerate}
    \item \textbf{Detection:} We first identify a partition that hides all parallel splits and joins (\cref{sec:detect_cg}).
    \item \textbf{Projection:} We then project the original net onto these parts to create the sub-models (\cref{sec:project_cg}).
    \item \textbf{Composition:} Finally, we derive the execution flow between these parts to define the top-level choice graph (\cref{sec:flow_cg}).
\end{enumerate}

\subsubsection{Detecting Concurrency-Hiding Partitions}\label{sec:detect_cg}
To decompose a WF-net into a choice graph, the high-level structure must behave like a \emph{state machine} (i.e., tokens follow exactly one path, and there is no active concurrency at the top level). In standard Petri net semantics, concurrency is introduced by transitions with multiple output arcs (AND-splits) and synchronized by transitions with multiple input arcs (AND-joins). Therefore, identifying a state machine structure requires identifying and encapsulating these concurrency patterns within the subnets. This leads to the definition of a \emph{concurrency-hiding partition}, which requires each part to interact with the rest of the net through exactly one entry place and one exit place.

\begin{definition}[Concurrency-Hiding Partition]\label{def:cg_pattern}
Let $N = (P, T, F)$ be a safe and sound WF-net. A partition $G = \{T_1, \dots, T_n\}$ of $T$ is a \emph{concurrency-hiding} iff for all $i \in \{1, \ldots, n\}$:
\[
    |\Pre{T_i}| = 1 \quad \wedge \quad |\Post{T_i}| = 1.
\]
\end{definition}

This simple definition effectively forces the encapsulation of all concurrent threads as it requires each abstracted part to satisfy the unique entry and exit requirements of a state machine. \cref{fig:cg_proj} shows a concurrency-hiding partition for an example WF-net.

\begin{figure}[!t]
    \centering    
     \includegraphics[width=\textwidth]{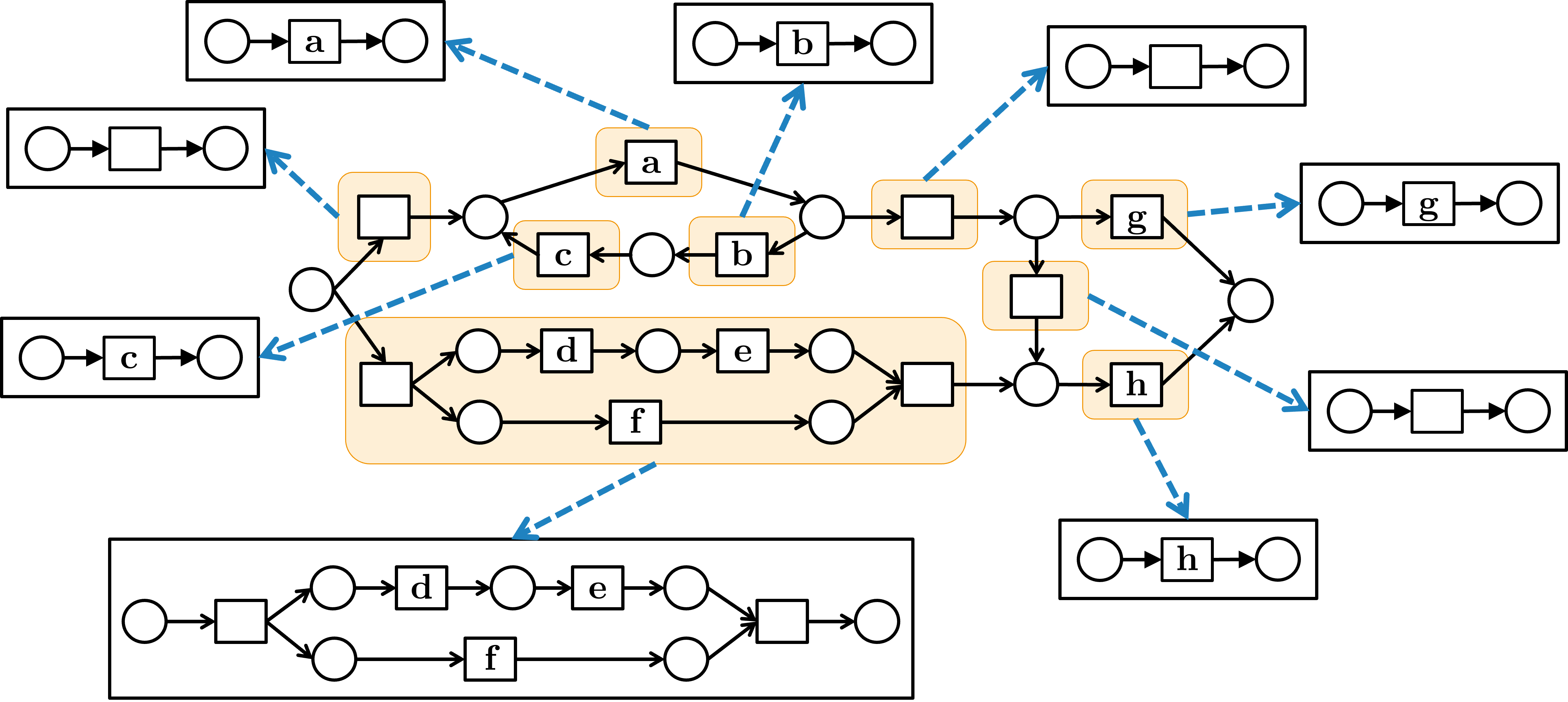}
    \caption{An example illustrating a concurrency-hiding partition and the choice graph projection of the WF-net on the identified parts.\label{fig:cg_proj}}
\end{figure}

To generate such a partition, we employ an algorithmic approach that explicitly targets and merges concurrency patterns. We first introduce the concepts of \textit{forward restricted reachability} and \textit{backward restricted reachability}. These concepts allow us to identify the set of transitions reachable from a specific node without passing through a designated ``stop'' node. This helps in isolating the specific branches of a parallel split or the specific threads feeding into a parallel join.

\begin{definition}[Forward Restricted Reachability]
Let $N = (P, T, F)$ be a Petri net, $p \in P$ be a place, and $t_{stop} \in T$ be a transition. The set of transitions reachable from $p$ via a path that does not visit $t_{stop}$, denoted as $\ReachAvoid{p}{t_{stop}}$, is defined as follows:
\[
t \in \ReachAvoid{p}{t_{stop}} \iff \exists\ t_1, \dots, t_{n} \in T \text{ and } p_1, \dots, p_{n} \in P \]
such that
\[ 
p_1 = p \ \wedge \ t_{n} = t,
\]
and for each $i$ ($1 \leq i \leq n$):
\[
t_i \neq t_{stop} \ \wedge \ (p_i, t_i) \in F,
\]
and for each $i$ ($1 \leq i < n$):
\[
(t_i, p_{i+1}) \in F.
\]

\end{definition}

\begin{definition}[Backward Restricted Reachability]
Let $N = (P, T, F)$ be a Petri net, $p \in P$ be a place, and $t_{stop} \in T$ be a transition. The set of transitions from which $p$ is reachable via a path that does not visit $t_{stop}$, denoted as $\BackReachAvoid{p}{t_{stop}}$, is defined as follows:
\[
t \in \BackReachAvoid{p}{t_{stop}} \iff \exists\ t_1, \dots, t_{n} \in T \text{ and } p_1, \dots, p_{n} \in P \]
such that
\[
t_1 = t \ \wedge \ p_{n} = p,
\]
and for each $i$ ($1 \leq i \leq n$):
\[
t_i \neq t_{stop} \ \wedge \ (t_i, p_i) \in F,
\]
and for each $i$ ($1 \leq i < n$):
\[
(p_i, t_{i+1}) \in F.
\]
\end{definition}

Using these concepts, we define an algorithm that iteratively merges transitions to hide concurrency. \cref{alg:choice_graph_detect} outlines the strategy, which operates in two symmetric phases. In the first step (lines 3-9), for every AND-split transition $t_{split}$, the algorithm identifies transitions that are reachable from a specific outgoing branch $p_1 \in \post{t_{split}}$ but not all branches (using $\ReachAvoid{}{t_{split}}$). These are merged with $t_{split}$, encapsulating the source of the concurrency and the parallel ``threads'' it spawns. In the second step (lines 10-16), the algorithm targets AND-joins. For a join transition $t_{join}$, it identifies transitions that can reach a specific incoming branch $p_1 \in \pre{t_{join}}$ but not all branches (using $\BackReachAvoid{}{t_{join}}$). These are merged with $t_{join}$, capturing the synchronization of the concurrency and parallel threads feeding into it. This bidirectional approach ensures that even in complex nets, the internal nodes of a concurrent region are pulled together from both ends.

\begin{algorithm}[!t]
\caption{Concurrency-Hiding Partitioning in a WF-Net ($\cgpart$).}\label{alg:choice_graph_detect}
\DontPrintSemicolon
\KwIn{A safe and sound WF-net $N = (P, T, F)$.}
\KwOut{A partition $G = \{T_1, \dots, T_n\}$ representing state machine nodes.}
\SetKwFunction{FMain}{$\cgpart$}
\SetKwProg{Fn}{Function}{:}{}

\Fn{\FMain{$N$}}{
    $G \leftarrow \{\{t\} \ | \ t \in T\}$\; 

    \tcp{Forward Analysis: AND-Splits}
    \For{$t_{split} \in T$ such that $|\post{t_{split}}| > 1$}{
        
        $threads \leftarrow \{t \in T \ | \ \exists_{p_1, p_2 \in \post{t_{split}}} \ t \in \ReachAvoid{p_1}{t_{split}} \ \wedge \ t \notin \ReachAvoid{p_2}{t_{split}} \}$\;

        $group \leftarrow \{t_{split}\} \cup threads$\;

        \If{$\card{group} > 1$}{
    
            $G \leftarrow G \setminus \{G_t \mid t \in group\} \cup \big\{ \bigcup\limits_{t \in group} G_{t} \big\}$\;

        }
        
    }

    \tcp{Backward Analysis: AND-Joins}
    \For{$t_{join} \in T$ such that $|\pre{t_{join}}| > 1$}{
 
        $threads \leftarrow \{t \in T \ | \ \exists_{p_1, p_2 \in \pre{t_{join}}} \ t \in \BackReachAvoid{p_1}{t_{join}} \ \wedge \ t \notin \BackReachAvoid{p_2}{t_{join}} \}$\;

        $group \leftarrow \{t_{join}\} \cup threads$\;

        \If{$\card{group} > 1$}{

            $G \leftarrow G \setminus \{G_t \mid t \in group\} \cup \big\{ \bigcup\limits_{t \in group} G_{t} \big\}$\;

        }

        
    }
    
    \KwRet{$G$}
}
\end{algorithm}

The partition $\cgpart(N)$ returned by \cref{alg:choice_graph_detect} is a valid concurrency-hiding partition if it consists of at least two parts and satisfies the requirement of \cref{def:cg_pattern} (i.e., $|\Pre{T_i}| = 1$ and $|\Post{T_i}| = 1$ for all $T_i \in G$).



\subsubsection{Projecting Child Models}\label{sec:project_cg}
Once the concurrency-hiding partition is identified, we isolate the subnet for each part. The projection is similar to the partial order case, as illustrated in the example shown in \cref{fig:cg_proj}. However, since $|\Pre{T'}| = 1 = |\Post{T'}|$ in a concurrency-hiding partition, the construction is simplified; no unification of multiple start/end places is required.

\begin{definition}[Choice Graph Projection]\label{def:projection_cg}
Let $N = (P, T, F)$ be a safe and sound WF-net. Let $G$ be a concurrency-hiding partition and $T' \in G$ be a part. The choice graph projection of $N$ on $T'$ is $\cgproject(N, T') = \mathit{Normalize}(N', p_s, p_e)$ where:
\begin{itemize}
    \item $P' = P\project{T'}$,
    \item $F' = F\project{P', T'}$,
    \item $N' = (P', T', F')$,
    \item $p_s \in P'$ is the unique entry place in $\Pre{T'}$,
    \item $p_e \in P'$ is the unique exit place in $\Post{T'}$.
    
\end{itemize}
\end{definition}

\subsubsection{Generating the Top-Level Choice Graph}\label{sec:flow_cg}

After the input WF-net $N$ is projected onto each part $T_i$ and recursively converted into a POWL submodel $\powl_i$, these child models must be assembled into a single hierarchical node that captures the overall decision logic of the original net. At the top-level state machine structure, there is no active concurrency; instead, the process moves from one part to another by following alternative paths. To represent this logic, we construct a choice graph where the nodes correspond to the submodels and the edges represent the possible flow of a single token through the process. 

\begin{definition}[Execution Flow]\label{def:flow_graph}
Let $G = \{T_1, \dots, T_n\}$ be a concurrency-hiding partition of a safe and sound WF-net $N$. The \emph{execution flow} of $G$ within $N$ is the choice graph $\mathit{flow}(N, G) = (V, E) \in \Graphs{n}$ with 
\[
    V = \{1, \dots, n\} \cup \{\xorstart{}, \xorend{}\}
\]
and 
\[
    E = \{(i, j) \mid (\Post{T_i} \cap \Pre{T_j}) \neq \emptyset\} \ \cup \ \{(\xorstart{}, i) \mid N_{source} \in \Pre{T_i}\} \ \cup \ \{(i, \xorend{}) \mid N_{sink} \in \Post{T_i}\}.
\]

\end{definition}

While $\mathit{order}(N, G)$ captures the internal causal dependencies between sub-components, $\mathit{flow}(N, G)$ encapsulates the entire routing logic of the choice graph, including its entry from the workflow source and its exit to the workflow sink. This allows us to treat the entire structure as an abstract choice graph template that can be instantiated with concrete submodels during the recursive conversion process.

\subsection{WF-Net to POWL 2.0 Converter}\label{sec:algo_main}

\begin{algorithm}[!t]
\caption{Conversion of a WF-Net into POWL 2.0.}\label{alg:convert_pn_to_powl_v2}
\DontPrintSemicolon
\KwIn{A safe and sound WF-net $N = (P, T, F)$.}
\KwOut{A POWL 2.0 model or a fallback representation.}

\SetKwFunction{FMain}{ConvertNetToPOWL}
\SetKwFunction{FallThrough}{FallThrough}
\SetKwProg{Fn}{Function}{:}{}
\Fn{\FMain{$N$}}{
    
    \tcp{(1) Base case}
        \If{$\card{T} = 1$ with $T = \{t\}$, $\card{P} = 2$, and $F = \{(N_{source}, t), (t, N_{sink})\}$}{
            \Return $t$\;
        }

    \tcp{(2) Attempt to decompose into a marked graph}   
       $G = \{T_1, \dots, T_n\} \leftarrow \popart(N)$\;
       \If{$\card{G} > 1$ and $G$ is conflict-hiding}{
            \tcp{Ensure structural progress}
            \If{$\nexists_{T_i \in G} \poproject(N, T_i) \cong N$}{ 
                \For{$T_i \in G$}{
                    $\powl_i \gets$ \FMain{$\poproject(N, T_i)$}\;
                }
                $\po \gets \closure{\mathit{order}}(N, G)$\;
                \Return $\po(\powl_1, \dots, \powl_n)$\; 
            }
        }

    \tcp{(3) Attempt to decompose into a state machine}
    $G = \{T_1, \dots, T_n\} \leftarrow \cgpart(N)$\;
    \If{$\card{G} > 1$ and $G$ is concurrency-hiding}{
        \tcp{Ensure structural progress}
        \If{$\nexists_{T_i \in G} \cgproject(N, T_i) \cong N$}{ 
            \For{$T_i \in G$}{
                $\powl_i \gets$ \FMain{$\cgproject(N, T_i)$}\;
            }
            $\cg \gets \mathit{flow}(N, G)$\;
            \Return $\cg(\powl_1, \dots, \powl_n)$\;
        }
    }

    \Return \FallThrough{$N$}\;
}
\end{algorithm}

\cref{alg:convert_pn_to_powl_v2} converts a safe and sound WF-net into an equivalent POWL 2.0 model. 
First, the algorithm checks whether the WF-net consists of a single transition (base case). If a base case is not detected, the algorithm attempts to identify a conflict-hiding partition (to form a partial order) or a concurrency-hiding partition (to form a choice graph). If a partition is found, the algorithm projects the WF-net on the identified parts, recursively converts the created subnets into POWL models, and combines them using the appropriate POWL structure. 

Despite the high expressiveness of POWL 2.0, it remains a formal subclass of sound WF-nets. If neither a partition nor a base case is detected, \cref{alg:convert_pn_to_powl_v2} reverts to a conceptual \emph{fall-through} mechanism. While the specific implementation of this mechanism is outside the scope of this paper, several strategies could be employed to ensure a result is always produced: 
\begin{itemize}
    \item \textbf{Approximation/Generalization}: The algorithm could attempt to find a POWL 2.0 model that approximates the language of the irreducible sub-net, accepting a loss of precision. This could involve techniques to automatically mine the closest POWL 2.0 representation, or, as a last resort in cases of highly unstructured behavior, replacing the fragment with a \emph{flower model} (a POWL model that allows for any behavior over the involved activities).    
    \item \textbf{Transition Duplication}: Advanced techniques may be employed to transform certain non-separable fragments into equivalent separable structures by duplicating transitions.
    \item \textbf{Hybrid Modeling}: The irreducible fragment may be preserved as a nested WF-net leaf node within the parent POWL hierarchy, allowing for a mix of graph-based and hierarchical modeling.
\end{itemize}

In our current implementation and the subsequent evaluation (\cref{sec:eval}), we treat the \emph{fall-through} as a conversion failure. Specifically, if the algorithm invokes the fall-through at any level of recursion, the transformation is marked as unsuccessful, indicating that the original WF-net logic cannot be natively represented in POWL 2.0. This approach allows us to strictly and transparently quantify the practical coverage of the POWL 2.0 language on real-world business logic, ensuring that our results reflect the expressive power of the language itself.

Note that the order in which the algorithm searches for the different partitions is semantically irrelevant. The only scenario where a WF-net may simultaneously satisfy the requirements for both a conflict-hiding partition and a concurrency-hiding partition is when the top-level process exhibits strict sequential behavior. In such cases, the process satisfies the structural definitions of both a marked graph and a state machine. It can be modeled as either a total order ($\po \in \Orders{n}$ where $i \po j \Leftrightarrow i < j$) or a linear choice graph ($\cg \in \Graphs{n}$ where edges form a path $\xorstart{N} \to 1 \to \dots \to n \to \xorend{N}$), which are semantically equivalent.

\subsection*{Example Applications} 
\cref{fig:ex:powl} shows the POWL model generated by applying \cref{alg:convert_pn_to_powl_v2} on the WF-net from \cref{fig:ex:wf}. This net is free-choice and sound, yet its decision points and cycles are interleaved in a way that does not conform to standard XOR or loop blocks. Our previous conversion algorithm from \cite{DBLP:conf/apn/KouraniPA25} would fail on this model.

To illustrate the importance of preprocessing, we consider a case where the initial algorithm fails but can be rectified via preprocessing. \cref{fig:pre_success} shows a WF-net where places $p_2$ and $p_3$ model a choice between $d$ or executing both $b$ and $c$ concurrently. When attempting to identify a partial order pattern directly, the requirements are violated (e.g., no single entry in the sub-part $\{b,c,d\}$ since $\post{p_2} \cap \{b,c,d\} = \{b, d\} \neq \{c, d\} = \post{p_3} \cap \{b,c,d\}$). However, applying the reduction rules from \cref{fig:prepro} before applying \cref{alg:convert_pn_to_powl_v2} transforms the net into a separable structure, enabling successful conversion into POWL.


\begin{figure}[!t]
    \centering    
    \begin{subfigure}{\textwidth}
        \centering
        \includegraphics[width=\textwidth]{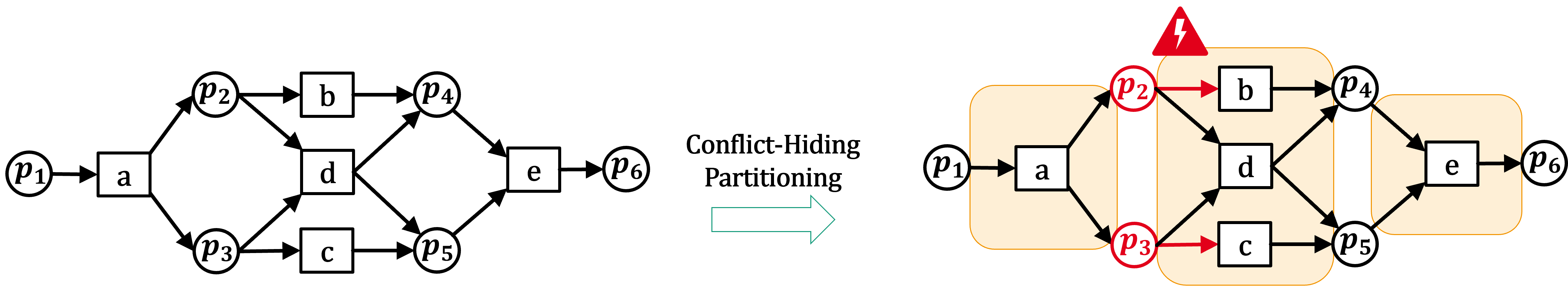}
        \caption{A WF-net where decision points combine choice with concurrency in a way that hinders pattern detection.}\label{fig:pre_success:fail}
    \end{subfigure}

    \begin{subfigure}{\textwidth}
        \centering
         \includegraphics[width=\textwidth]{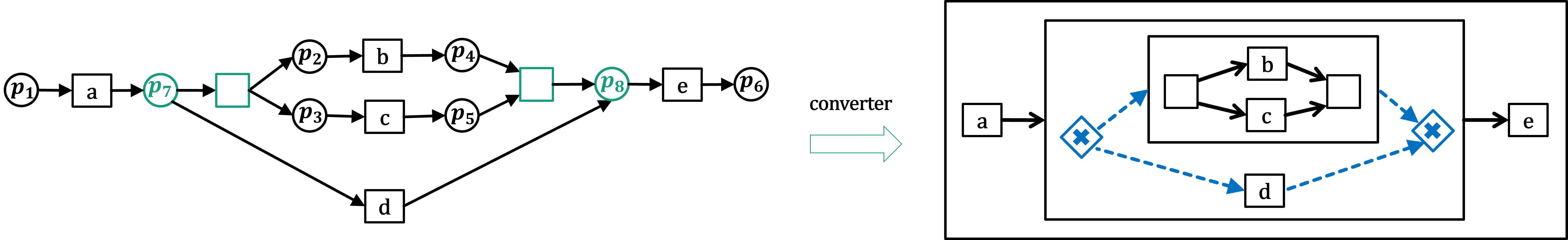}
        \caption{Result of applying the reduction rules (cf. \cref{fig:prepro}) to the net above, followed by a successful conversion.}\label{fig:pre_success:work}
    \end{subfigure}
    \caption{Impact of preprocessing on conversion success.\label{fig:pre_success}}
\end{figure}

\begin{figure}[!t]
\centering
   
    \begin{subfigure}{\textwidth}
    \centering
    \includegraphics[width=0.8\textwidth]{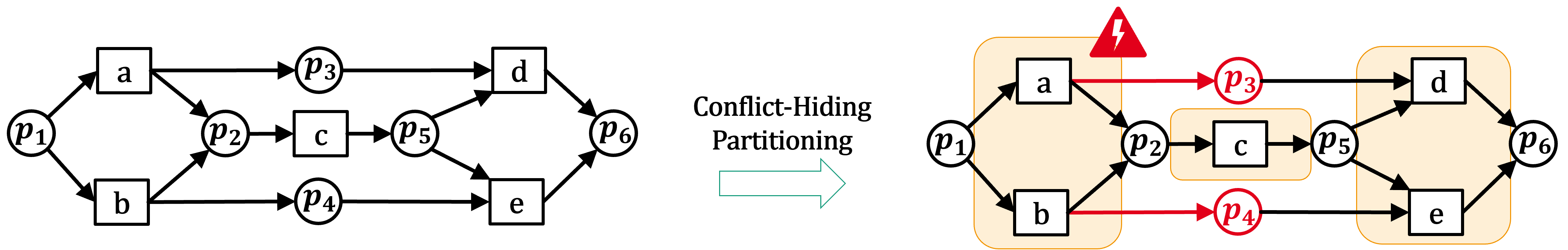}
    \caption{Non-free-choice WF-net with a long-term dependency between choices.}\label{fig:neg_ex:2}
    \end{subfigure}

    \begin{subfigure}{\textwidth}
    \centering
    \includegraphics[width=\textwidth]{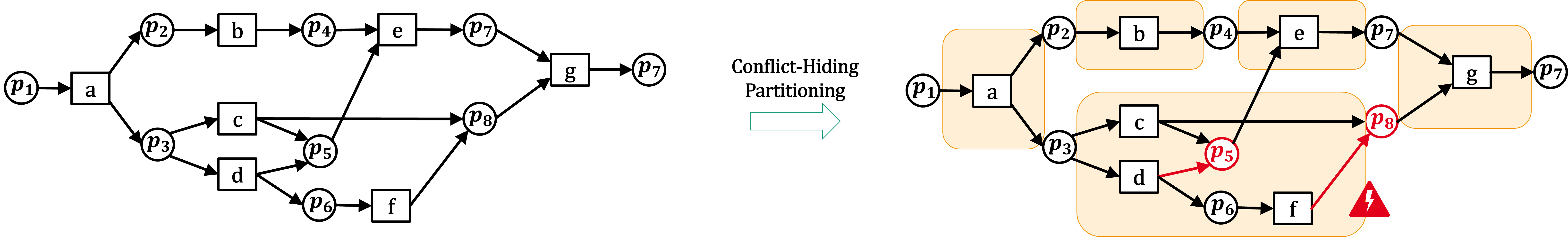}
    \caption{Free-choice WF-net with a long-term dependency. This model follows the structural pattern from \cite[Figure 25]{DBLP:journals/cj/PolyvyanyyGFW14}}\label{fig:neg_ex:4}
    \end{subfigure}

    \begin{subfigure}{\textwidth}
    \centering
    \includegraphics[width=\textwidth]{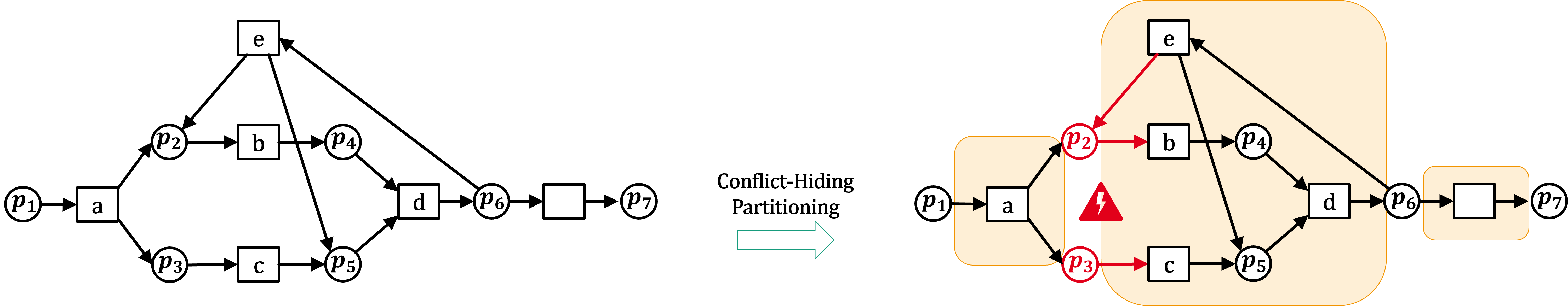}
    \caption{Free-choice WF-net where a loop jumps into a parallel branch.}\label{fig:neg_ex:5}
    \end{subfigure}

    \begin{subfigure}{\textwidth}
    \centering
    \includegraphics[width=\textwidth]{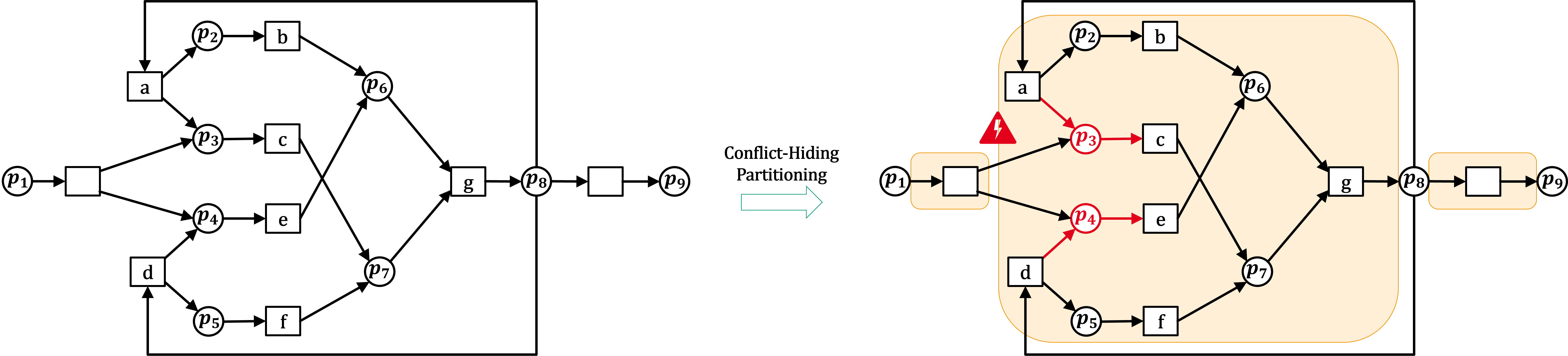}
    \caption{Free-choice WF-net with interleaved choice and concurrency sharing a synchronization point.}\label{fig:neg_ex:6}
    \end{subfigure}

    \caption{Examples where the decomposition attempts in \cref{alg:convert_pn_to_powl_v2} fail.\label{fig:neg_ex}}
\end{figure}

Despite these advancements, certain WF-nets remain outside the expressive scope of POWL 2.0. \cref{fig:neg_ex} illustrates example scenarios where the decomposition attempts in \cref{alg:convert_pn_to_powl_v2} fail, reverting to the fall-through:

\begin{itemize}
    \item The first WF-net (\cref{fig:neg_ex:2}) exhibits a choice between activities $a$ and $b$, followed by a non-free-choice between $d$ and $e$ that is influenced by the preceding choice. This long-term dependency choice cannot be represented in POWL. \cref{alg:convert_pn_to_powl_v2} first attempts to identify a conflict-hiding partition, resulting in a partition that violates the single exit requirement of \cref{def:po_pattern} for the first part $T_1 = \{a, b\}$ ($p_3 \not\approx_{T_1} p_4$). \cref{alg:convert_pn_to_powl_v2} also fails to detect a concurrency-hiding partition in the second step, as it merges all transitions into the same part.
    
     \item The second WF-net (\cref{fig:neg_ex:4}) illustrates a sound free-choice WF-net that is nonetheless non-separable. In this process, the lower thread chooses between $c$ and $d$. Both options synchronize with the upper thread at $e$ (via place $p_5$). However, if $d$ is chosen, an additional activity $f$ is executed concurrently with $e$. When \cref{alg:convert_pn_to_powl_v2} attempts to construct a partial order, it groups the decision logic $\{c, d, f\}$ into a single part. However, as highlighted in red, this partition has two distinct exit interfaces: $p_5$ (which leads to $e$) and $p_8$ (which leads to $g$). These exit places are not equivalent with respect to the part ($p_5 \not\approx p_8$), violating the single exit requirement (Condition 4 of \cref{def:po_pattern}). When the algorithm attempts to construct a choice graph, it ends up merging all transitions into a single part. Therefore, neither a partial order nor a choice graph can be detected.
    
    \item The third WF-net (\cref{fig:neg_ex:5}) depicts a sound free-choice process with a complex loop structure. The process starts with a parallel split at $a$, initiating $b$ and $c$. These branches synchronize at $d$. A decision is then made at $p_6$: either finish via a silent transition or loop back via $e$. Crucially, the loop transition $e$ places tokens in $p_2$ and $p_5$. This restarts activity $b$ but effectively skips $c$ (as the token for $c$'s completion is provided directly to $p_5$). When \cref{alg:convert_pn_to_powl_v2} attempts to isolate the loop logic into a partial order partition, it groups $\{b, c, d, e\}$. However, this part violates the single entry requirement (Condition 3 of \cref{def:po_pattern}). The entry places for this part are $p_2$ and $p_3$ (fed by the external transition $a$). These places are not equivalent with respect to the part: $p_2$ triggers $b$, while $p_3$ triggers $c$. The algorithm then attempts to construct a choice graph, merging all core transitions into a single part and only isolating the final silent transition. Because this results in a non-progressive projection isomorphic to the net itself ($N_i \cong N$), the choice graph attempt also fails.

    \item The fourth WF-net (\cref{fig:neg_ex:6}) demonstrates a sound free-choice process with complex initialization and looping logic. Initially, activities $c$ and $e$ execute concurrently, and the two branches synchronize at $g$. After $g$, a decision is made to either exit or loop back. If the process loops, a choice is made between $a$ and $d$. Firing $a$ triggers the concurrent execution of $\{b, c\}$, while firing $d$ triggers $\{e, f\}$; that is, $c$ and $e$ are never re-enabled concurrently after the initial run. \cref{alg:convert_pn_to_powl_v2} attempts to construct a partial order, but the partitioning fails because the region $\{a, b, c, d, e, f, g\}$ has two non-equivalent entry points, $p_3$ and $p_4$. When attempting to construct a choice graph, the algorithm merges all core transitions into a single part, managing only to isolate the initial and final silent transitions. Because this results in a non-progressive projection that is isomorphic to the net itself ($N_i \cong N$), the choice-graph attempt also fails.
    
\end{itemize}

\section{Correctness and Completeness Guarantees}\label{sec:gua}
In this section, we prove the correctness and completeness guarantees of \cref{alg:convert_pn_to_powl_v2}. Our proof strategy for correctness relies on structural induction on the WF-net. We first show that the projection operations for both partial order and choice graph patterns preserve safeness and soundness, allowing for the recursive application of the algorithm on the created subnets (\cref{sec:lemmas:proj}).  Then, we demonstrate that combining the languages of these subnets using the respective POWL constructs accurately reflects the language of the original WF-net (\cref{sec:lemmas:pattern}). We combine these findings to prove the overall correctness of the algorithm (\cref{sec:gua:lang}). Finally, we show the completeness of our algorithm on the class of separable WF-nets (\cref{sec:gua:redisc}).

\subsection{Projection Structural Guarantees}\label{sec:lemmas:proj}
This section proves that the projections defined in \cref{def:projection_po} and \cref{def:projection_cg}, when applied to safe and sound WF-nets, produce safe and sound WF-nets. This ensures that the recursive calls in \cref{alg:convert_pn_to_powl_v2} are always applied to valid inputs.

\begin{lemma}[Partial Order Projection Structural Guarantees]\label{thm:po_soundness_preservation}
Let $N = (P, T, F)$ be a safe and sound WF-net. Let $G = \{T_1, \ldots, T_n\}$ be a conflict-hiding partition of $T$. Let $N_i = (P_i, T_i, F_i) \allowbreak = \poproject(N, T_i)$ for each i ($1 \leq i \leq n$). Then $N_i$ is a safe and sound WF-net.
\end{lemma}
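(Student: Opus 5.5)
The proof splits into two parts. The first shows that $N' $ (before normalization) has a WF-net shape. The second shows that the behaviour of $N_i$ is exactly the behaviour of $N$ restricted to $T_i$, which gives safeness and soundness.

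\textbf{Structure.} We start with $N' = (P', T_i, F')$. Its places are the internal places of $T_i$, meaning those adjacent only to $T_i$, together with $p_s$ and $p_e$.

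We first note a fact about internal places. If a place $p \in P\project{T_i}\setminus(\Pre{T_i}\cup\Post{T_i})$ has a pre-transition, that pre-transition lies in $T_i$. Indeed, if $p$ had an outside pre-transition and a post-transition in $T_i$, it would be an entry point. Post-transitions behave symmetrically, and a place cannot have pre- and post-transitions only outside $T_i$ while lying in the projection. Hence internal places are fully contained in $T_i$, so the only places that can lack a pre-transition (respectively post-transition) in $N'$ are $p_s$ (respectively $p_e$).

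Normalization then makes the source and sink unique, with empty pre-set and post-set respectively. For connectivity, take any $t \in T_i$. Since $N$ is a WF-net, there is a path $N_{source}\to\dots\to t\to\dots\to N_{sink}$. Its last entry into $T_i$ passes through a place of $\Pre{T_i}$, and its first exit from $T_i$ passes through a place of $\Post{T_i}$. Cutting the path there yields a path $p_s \to \dots \to t \to \dots \to p_e$ inside $N'$.

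\textbf{Behaviour.} Here the conflict-hiding conditions are used. Condition 3 says all entry places have identical post-sets within $T_i$, and their pre-sets within $T_i$ coincide as well. Condition 4 says the same for exit places. Conditions 1 and 2 say each entry or exit place belongs to only one part, so these places act as marked-graph connectors at the top level.

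We claim the following. In any reachable marking of $N$, the places of $\Pre{T_i}$ are either all marked or all unmarked with respect to when $T_i$ can start. Similarly, firing an "exit" transition of $T_i$ marks all of $\Post{T_i}$ simultaneously, because the exit places have the same pre-set in $T_i$.

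We then build a simulation. Fix a firing sequence $\sigma$ of $N$ that reaches a marking enabling some $t\in T_i$. We identify the joint marking of the entry places with a token on $p_s$, the joint marking of the exit places with a token on $p_e$, and keep internal places as they are. Projecting $\sigma$ onto $T_i$ then gives a firing sequence of $N'$ from $[p_s]$.

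Conversely, every firing sequence of $N'$ from $[p_s]$ lifts to $N$. We first fire a prefix that marks $\Pre{T_i}$. Such a prefix exists because some transition of $T_i$ is live, the WF-net is sound, and the entry places are equivalent. We then replay the sequence, since internal places are untouched by transitions outside $T_i$.

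\textbf{Deriving the properties.} With this correspondence in place, each property follows. Safeness of $N_i$ follows from safeness of $N$; for $p_s$ and $p_e$, a double token would mean a double token on an entry or exit place. No dead transitions holds because every $t\in T_i$ is enabled in some reachable marking of $N$, and that marking projects. For option to complete, extend the lifted run of $N$ to $[N_{sink}]$ using soundness. Its projection on $T_i$ must empty the internal places and eventually mark $p_e$, because every token of $T_i$ must leave through the exit places. Proper completion then follows from option to complete. Normalization, which only adds silent transitions before $p_s$ and after $p_e$, clearly preserves all of these properties.

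\textbf{Main obstacle.} The hardest step is the "all-or-nothing" claim: that the places of $\Pre{T_i}$ get marked together and that $T_i$ is never re-entered while still active. Without this, $p_s$ would not faithfully represent the entry interface.

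To prove it, we plan to argue by contradiction using safeness and soundness. Suppose only part of $\Pre{T_i}$ is marked. Since all entry places share the same post-transitions in $T_i$, those transitions are blocked until the remaining entry places are filled. Conditions 1 and 2 guarantee that those remaining places are fed only by transitions outside $T_i$, acting as synchronizations. Using option to complete, we show the rest must arrive before any $T_i$ transition fires. Using safeness, we show that a second entry while tokens remain inside $T_i$ leads to a reachable marking that cannot complete properly.

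If this direct argument gets messy, the fallback is to consider the net obtained by contracting $T_i$ to a single transition and show that it is sound.
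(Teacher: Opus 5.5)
Your proposal follows the same route as the paper's proof. Both build a two-way correspondence between the behaviour of $N$ on $T_i$ and the behaviour of the projection: runs of $N$ are projected to get no dead transitions, and runs of the projection are lifted into $N$ and completed there to get option to complete and safeness. Your version is more explicit than the paper about the WF-net shape and about the interface invariant, which the paper simply attributes to the single entry and exit conditions.

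When you prove that invariant, base it on Conditions 3 and 4: a transition of $T_i$ that consumes from or produces into one entry place does the same for every entry place, and likewise for exit places. Do not base it on Conditions 1 and 2, because they do not make interface places pure connectors. An entry place of $T_i$ may also be fed from inside $T_i$, and it may be consumed by the part that feeds it.
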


\begin{proof}

\textbf{(1) $N_i$ is a WF-net:} By construction, every node in $N_i$ is on a path from $p_s$ and $p_e$. The applied normalization ensures that additional source and/or sink places are inserted in case $\pre{p_s} \neq \emptyset$ and/or $\post{p_e} \neq \emptyset$, respectively.

\textbf{(2) We prove that $N_i$ is sound.}
 
\textbf{(2.1) No dead transitions:}
    Consider any transition $t \in T_i$. Since $N$ is sound, there exists a reachable marking $M$ in $N$ that enables $t$. Consider the marking $M_i$ in $N_i$ defined as follows for $p \in P_i$:
\begin{equation*}
M_i(p) =
    \begin{cases}
        M(p) & \text{if } p \in P, \\
        1 & \text{if } p = p_s \text{ and } M(p) = 1 \text{ for all } p \in \Pre{T_i}, \\
        1 & \text{if } p = p_e \text{ and } M(p) = 1 \text{ for all } p \in \Post{T_i}, \\
        0 & \text{otherwise (for places added by normalization)} .
    \end{cases}
\end{equation*}
    The projection operation only removes places and transitions that are not in $T_i$ and replaces the connections to $\Pre{T_i}$ and $\Post{T_i}$ with $p_s$ and $p_e$, respectively. Thus, any firing sequence leading to $M$ in $N$ can be transformed into a firing sequence leading to $M_i$ in $N_i$ by removing transitions not in $T_i$ (and potentially firing the additional silent transition added by normalization). Therefore, $M_i$ is reachable from $[{N_i}_{source}]$ in $N_i$. The single entry and exit properties (cf. \cref{def:po_pattern}) ensure that if $t$ needs to consume a token from a place $p \in \Pre{T_i}$ (or $p \in \Post{T_i}$) in $N$, then all other places in $\Pre{T_i}$ (or in $\Post{T_i}$, respectively) must be included in $M$ as well. This implies that $M$ and $M_i$ agree on all places that are needed to enable $t$, including start and end places. Therefore, $t$ is enabled in $M_i$. 

\textbf{(2.2) Option to complete:}
    Consider any marking $M_i$ reachable from $[{N_i}_{source}]$ in $N_i$. Due to the single entry and exit properties (cf. \cref{def:po_pattern}), there must exist a reachable marking $M$ in $N$ that enables the transitions in $T_i$ in the same way as $M_i$ does in $N_i$, i.e., $M$ is defined for $p \in P\project{T_i}$ as follows:
\begin{equation*}
    M(p) = 
    \begin{cases}
        M_i(p_s) & \text{if } p \in \Pre{T_i}, \\
        M_i(p_e) & \text{if } p \in \Post{T_i}, \\
        M_i(p) & \text{otherwise.}
    \end{cases}
\end{equation*} 

Since $N$ is sound, there exists a firing sequence $\sigma$ from $M$ to $[N_{sink}]$ in $N$. We can construct a corresponding firing sequence $\sigma_i$ from $M_i$ to $[{N_i}_{sink}]$ in $N_i$ by taking only the transitions in $\sigma$ that belong to $T_i$ (and potentially firing the additional silent transition added by normalization).

\textbf{(3) $N_i$ is safe:}
Assume, for the sake of contradiction, that there exist a reachable marking $M_i$ in $N_i$ and a place $p \in P_i$ such that $M_i(p) \geq 2$. There must exist a reachable marking $M$ in $N$ that enables the transitions in $T_i$ in the same way as $M_i$ does in $N_i$ (cf. the proof of ``option to complete''). Then there exists $p' \in P\project{T_i}$ such that $M(p') = M_i(p) \geq 2$. This violates the safeness of $N$. \end{proof}

\begin{lemma}[Choice Graph Projection Structural Guarantees]\label{thm:cg_soundness_preservation}
Let $N = (P, T, F)$ be a safe and sound WF-net. Let $G = \{T_1, \ldots, T_n\}$ be a concurrency-hiding partition (\cref{def:cg_pattern}). Let $N_i = \cgproject(N, T_i)$ for any $T_i \in G$. Then $N_i$ is a safe and sound WF-net.
\end{lemma}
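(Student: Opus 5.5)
The plan is to follow the three-part structure of the proof of \cref{thm:po_soundness_preservation}, adapted to the simpler choice graph projection. Here no places are merged: $N_i$ keeps the places $P\project{T_i}$, with the unique entry place $p_s \in \Pre{T_i}$ and exit place $p_e \in \Post{T_i}$ as designated start and end. The only subtleties are the places of $P\project{T_i}$ other than $p_s,p_e$ and the extra source/sink introduced by normalization.

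The first step is a structural lemma. Every place $p \in P\project{T_i}\setminus\{p_s,p_e\}$ is internal, i.e.\ $\pre{p}\cup\post{p}\subseteq T_i$. If $p$ had an input transition outside $T_i$ and an output in $T_i$, it would lie in $\Pre{T_i}$; the symmetric case would put it in $\Post{T_i}$. A place touching $T_i$ with an empty pre-set is $N_{source}\in\Pre{T_i}$, and similarly for the sink, so both cases reduce to $p_s$ or $p_e$. Hence $F'$ contains all arcs of $N$ incident to $T_i$ except arcs from $T_i$ into $p_s$ and from $p_e$ into $T_i$, which are also retained. The WF-net property follows from soundness of $N$: every $t\in T_i$ lies on a path from $N_{source}$ to $N_{sink}$. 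Cut such a path at the last entry into $T_i$ before $t$ (necessarily through $p_s$) and the first exit after $t$ (through $p_e$). This gives a path $p_s \to t \to p_e$ inside $N_i$. Normalization then makes the source and sink unique.

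The second step, the core, is a behavioural correspondence. Since $|\Pre{T_i}|=1$, every token entering the region of $T_i$ from outside passes through the single place $p_s$. I would show that in every reachable marking $M$ of $N$, the restriction of $M$ to the internal places of $T_i$ together with $p_s,p_e$ is either empty or a marking reachable in $N_i$ from $[p_s]$. This can be argued by induction on firing sequences: a firing of an outside transition only adds to $p_s$ or removes from $p_e$. It must also be argued that a new token cannot arrive in $p_s$ while the region is active, since by safeness plus soundness that would yield two concurrent "instances". I expect this to be the main obstacle. The paper's previous proof handwaves it, and I would likewise argue that a second instance would violate safeness or proper completion of $N$. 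Conversely, every firing sequence of $N_i$ from $[p_s]$ is embedded in $N$ by first reaching a marking with a token in $p_s$ (which exists since $p_s$'s output transitions are not dead) and then replaying it, because internal places are untouched by outside transitions.

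The third step derives the three properties from this correspondence. \emph{No dead transitions:} take a reachable marking of $N$ enabling $t\in T_i$ and project it, after firing the normalization silent transition if one was added. \emph{Option to complete:} given $M_i$ reachable in $N_i$, embed it as $M$ in $N$, take a completing sequence of $N$, and keep only its $T_i$-transitions until the region empties into $p_e$. The correspondence guarantees that this leaves exactly $[p_e]$; then fire the normalization sink transition. \emph{Safeness:} a marking of $N_i$ with two tokens in a place embeds into a reachable marking of $N$ with two tokens in the same place, contradicting safeness of $N$.
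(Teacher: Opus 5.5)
Your plan has the same three parts as the paper's proof, and the steps you actually carry out are sharper than the paper's. Every place of $P\project{T_i}$ other than $p_s,p_e$ does have its pre- and post-set inside $T_i$, and the path-cutting argument gives the WF-net property. Replaying a run of the unnormalized projection $N'$ from any reachable marking of $N$ that marks $p_s$ (monotonicity) gives safeness.

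The gap is the step you flag and then propose to hand-wave: the region is entered only when it is empty, $p_s$ is not refilled while the region still holds tokens, and the region puts a token into $p_e$ only when it is otherwise empty. This is the entire content of the soundness part. The paper does not prove it either; its ``any token reaching $t$ must have passed through $p_{in}$'' and ``no two concurrent threads can enter $T_i$'' assert the same thing.

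Without this step neither transfer works. The marking of $N$ that enables $t$ may occur during a later entry into the region, with leftovers from an earlier entry, so its restriction need not be reachable in $N_i$. The completing sequence of $N$ may refill $p_s$ before the region empties, so its $T_i$-projection need not be firable in $N_i$. And if the region releases a token into $p_e$ while keeping internal tokens, $N_i$ can mark its sink with tokens left behind. Also, outside transitions can remove tokens from $p_s$ (a top-level choice) and add tokens to $p_e$ (a top-level merge), not only the two effects you list. These cases are harmless, but your induction has to cover them.

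The missing idea is to combine your replay with the fact that the environment can always take the exit token.
\begin{itemize}
\item \textbf{An outside transition consuming only $p_e$.} If $p_e\neq N_{sink}$, let $T_j$ be a part containing an output transition of $p_e$. Then $\Pre{T_j}=\{p_e\}$. Every other place feeding $T_j$ has its pre-set inside $T_j$, so it is empty before the first $T_j$-transition fires. Hence some $x\in T_j$ has $\pre{x}=\{p_e\}$.
\item \textbf{No early exit.} Suppose $N'$ reaches $[p_e]+L$ from $[p_s]$ with $L$ nonempty, and embed this marking in $N$. If $p_e=N_{sink}$, proper completion fails. Otherwise fire $x$.
\item \textbf{The leftover $L$ is stuck.} The tokens of $L$ sit in places that only $T_i$-transitions can empty (when $p_s\neq p_e$, a token the region put back into $p_s$ forces $\post{p_s}\subseteq T_i$). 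Since $T_i$-transitions output only into $P\project{T_i}$, $L$ can disappear in only two ways.
  \begin{itemize}
  \item The region marks $p_e$ again. Replaying the same $T_i$-transitions without $x$ is then a run of $N'$ with two tokens in $p_e$, contradicting the safeness you already have.
  \item A fresh token reaches $p_s$ while the region still marks some $q\neq p_e$. Replaying the region's sequence so far from that token then puts a second token into $q$.
  \end{itemize}
  If neither happens, option to complete fails in $N$.
\item \textbf{No re-entry.} The same replay, applied to the visit that produced the tokens, shows that $p_s$ is never refilled while the region marks a place consumed by $T_i$.
\end{itemize}
By induction over runs, the $T_i$-part of every run of $N$ then splits into successive runs of $N'$ from $[p_s]$ in which $p_e$ is marked only in the marking $[p_e]$. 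Tokens that other parts put into $p_e$ can be ignored, since $T_i$ never consumes them. This is exactly the correspondence your second and third steps need.
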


\begin{proof}

\textbf{(1) $N_i$ is a WF-net:} 
The unique connection points requirement of \cref{def:cg_pattern} guarantees that every node in $T_i$ lies on a path from the unique entry point to the unique exit point.


\textbf{(2) We prove that $N_i$ is sound.}

\textbf{(2.1) No dead transitions:}
Let $p_{in} \in \Pre{T_i}$ be the unique entry point of $T_i$ and $p_{out} \in \Post{T_i}$ be its unique exit point. Since $N$ is sound, any transition $t \in T_i$ can be enabled by some marking reachable from $N_{source}$. Since $p_{in}$ is the only entry point into the subnet $N_i$ (due to $|\Pre{T_i}| = 1$), any token reaching $t$ must have passed through $p_{in}$. Thus, in the sub-net $N_i$, $t$ is reachable from $[p_{in}]$.

\textbf{(2.2) Option to complete:}
Since $N$ has the option to complete, any token entering $T_i$ via $p_{in}$ must eventually be able to leave $T_i$ to reach $N_{sink}$. Since $p_{out}$ is the only exit point from the subnet $N_i$ (due to $|\Post{T_i}| = 1$), there must exist a firing sequence within $N_i$ leading from every reachable state to $[p_{out}]$. 

\textbf{(3) $N_i$ is safe:}
Assume $N_i$ is unsafe. Then there exists a reachable marking in $N_i$ where a place $p \in P \project{T_i}$ holds $k \ge 2$ tokens. In the original net $N$, since $G$ is a concurrency-hiding partition, the high-level structure behaves as a state machine. Specifically, no two concurrent threads can enter $T_i$ simultaneously from the outside. Therefore, any accumulation of tokens inside $T_i$ must be generated by the internal logic of $T_i$ itself. If $N_i$ generates $k \ge 2$ tokens, then $N$ would also generate these tokens when executing this part, violating the safeness of $N$. Thus, $N_i$ must be safe.
\end{proof}

\subsection{Pattern Language Preservation Guarantees}\label{sec:lemmas:pattern}
This section establishes the language preservation guarantees of the identified patterns. We prove that the generated partial order and choice graph constructs correctly capture the behavior of the original WF-net.

\begin{lemma}[Partial Order Pattern Language Preservation]\label{thm:po_lang_preservation}
Let $N = (P, T, F)$ be a safe and sound WF-net. Let $G = \{T_1, \ldots, T_n\}$ be a conflict-hiding partition of $T$ and $\po =\closure{\mathit{order}}(N, G)$. Let $\powl_1, \ldots, \powl_n$ be POWL models such that $\lang(\poproject(N, T_i)) \allowbreak = \lang(\powl_i)$ for each $i$ ($1 \leq i \leq n$). Then, $\lang(\po(\powl_1, \ldots, \powl_n)) = \lang(N)$.
\end{lemma}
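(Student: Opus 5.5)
We prove the two inclusions $\lang(N) \subseteq \lang(\po(\powl_1,\ldots,\powl_n))$ and $\lang(\po(\powl_1,\ldots,\powl_n)) \subseteq \lang(N)$ separately. Both directions rest on a projection/simulation correspondence between firing sequences of $N$ and those of the subnets $N_i = \poproject(N, T_i)$. Here $\lang(N)$ is the set of label sequences (silent transitions erased) of firing sequences from $[N_{source}]$ to $[N_{sink}]$. \cref{thm:po_soundness_preservation} gives that each $N_i$ is a safe and sound WF-net, so $\lang(N_i)=\lang(\powl_i)$ is well behaved. \cref{def:ext_sem} reduces the goal to showing that $\lang(N)$ equals the set of order-preserving shuffles $\shuffle_{\po}(\sigma_1,\ldots,\sigma_n)$ with $\sigma_i \in \lang(N_i)$.

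\textbf{Structural facts.} First we extract the marked-graph structure at the top level from the conflict-hiding conditions. By Conditions 1 and 2 of \cref{def:po_pattern}, every place $p$ that is not internal to a single part is an exit of at most one part and an entry of at most one part. So $p$ connects at most one pair $(T_i, T_j)$, or is $N_{source}$, or is $N_{sink}$. By Conditions 3 and 4, all entry places of $T_i$ have the same postset inside $T_i$, and all exit places have the same preset inside $T_i$. Hence inside $N_i$ they collapse faithfully to $p_s$ and $p_e$. We also show, using soundness and safeness, that in every firing sequence of $N$ each part $T_i$ is ``executed exactly once''. Concretely:
\begin{itemize}
\item the entry places of $T_i$ are all marked before any transition of $T_i$ fires;
\item they are consumed simultaneously by the first transition of $T_i$;
\item the exit places are produced together by a final transition of $T_i$ and are never re-marked.
\end{itemize}
This uses safeness together with the fact that exits and entries of distinct parts are disjoint places. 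We also use that the top level is acyclic, which follows from soundness as argued after \cref{def:execution_order}.

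\textbf{Inclusion $\subseteq$ of $\lang(N)$.} Take a complete firing sequence $\rho$ of $N$. Project it onto $T_i$, prepending and appending the normalization silent transitions if needed. Simulating markings as in the proof of \cref{thm:po_soundness_preservation}, this projection is a complete firing sequence of $N_i$, so its label sequence $\sigma_i$ lies in $\lang(\powl_i)$. The positions in $\rho$ define the bijection $f$ of \cref{def:shuffle}, and order within each $\sigma_i$ is preserved automatically. For the partial-order constraint: if $i \po j$ directly, some place in $\Post{T_i}\cap\Pre{T_j}$ must be marked before any transition of $T_j$ fires. By the ``executed once'' fact this happens only after the last transition of $T_i$. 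Transitivity then handles $\closure{\mathit{order}}$. Silent transitions need a small bookkeeping argument, since they vanish from the labels but not from the ordering.

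\textbf{Inclusion $\supseteq$ and main obstacle.} Given $\sigma_i \in \lang(N_i)$ realised by firing sequences $\rho_i$ and an admissible interleaving $\sigma$, we build a firing sequence of $N$ by interleaving the $\rho_i$ (minus the normalization transitions) according to $f$. We check by induction on the prefix that each step is enabled, using an invariant on the marking of $N$. The invariant says the marking is the union of the internal markings of the $N_i$, plus:
\begin{itemize}
\item all of $\Pre{T_j}$ for parts $j$ not yet started whose $\po$-predecessors are all finished;
\item all of $\Post{T_i}$ for finished parts whose successors have not started.
\end{itemize}
The main obstacle is this invariant. We must show that a part's entry places are all marked exactly when all its $\po$-predecessors have completed. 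This needs every entry place other than $N_{source}$ to be an exit of some part, and every exit other than $N_{sink}$ to be an entry of some part, which we derive from the WF connectivity and the definitions of $\Pre{\cdot}$ and $\Post{\cdot}$. It also needs that only the constraints of $\po$ are required, and we must rule out interference from internal places shared between parts; the latter is excluded since internal places lie in a single $P\project{T_i}$. Finally, the interleaving ends in $[N_{sink}]$, because every part ends with its exits marked and every non-sink exit is consumed.
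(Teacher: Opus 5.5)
Your route is the paper's route. In both, the two inclusions are proved by projecting a run of $N$ onto the parts and by re-interleaving runs of the $N_i=\poproject(N,T_i)$, with the order read off the interface places. Your ``executed exactly once'' fact plays the role of the paper's steps (1.1)--(1.3).

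The gap is that this fact, as you state it, fails when a part loops through its own interface places, and your ordering step for the first inclusion depends on it. You use only half of Conditions 3 and 4: entries share their postset in $T_i$, and exits share their preset. But \cref{def:po_pattern} also lets internal transitions consume exits and re-produce entries. That is why $\approx_{T_i}$ compares both pre- and postsets, and why \cref{def:projection_po} has arcs $(t,p_s)$ and $(p_e,t)$.

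A concrete instance: take $N_{source}\to a\to p_1\to b\to p_2\to c\to N_{sink}$ plus a loop $p_2\to d\to p_1$, and the partition $G=\{\{a\},\{b,d\},\{c\}\}$.
For $T_i=\{b,d\}$ we get $\Pre{T_i}=\{p_1\}$ and $\Post{T_i}=\{p_2\}=\Pre{\{c\}}$. Note that $p_2\notin\Pre{T_i}$ because $\pre{p_2}=\{b\}\subseteq T_i$. All four conditions hold, and the net is safe and sound.
In the run $\langle a,b,d,b,c\rangle$, the first $b$ is not the last transition of $T_i$, yet it marks the exit $p_2$ while $c$ is already enabled. Then $d$ consumes $p_2$ and re-marks the entry $p_1$.
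So ``exits are produced by a final transition and never re-marked'' is false here. Your inference that the interface place is marked only after the last transition of $T_i$ therefore does not follow. Your invariant for the second inclusion also fails if ``internal markings'' excludes $p_s$ and $p_e$: after the first $b$, $p_2$ is marked although $\{b,d\}$ is still running.

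The statement you actually need is that once a successor $T_j$ consumes an exit token of $T_i$, no transition of $T_i$ can fire again. It follows from three facts:
\begin{enumerate}
\item By the postset half of Condition 4, any transition of $T_i$ that consumes one exit consumes all of $\Post{T_i}$.
\item Take $q\in\Post{T_i}\cap\Pre{T_j}$. Conditions 1 and 2 force $q$ to be adjacent only to $T_i\cup T_j$. They also forbid $q$ from having both a consumer in $T_i$ and a producer in $T_j$, since otherwise $q\in\Pre{T_i}\cap\Pre{T_j}$.
\item Proper completion of the sound net $N_i$ (\cref{thm:po_soundness_preservation}) implies that whenever $p_e$ is marked in $N_i$, it is the only token. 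This holds directly if $p_e$ is the sink of $N_i$, and via the normalizing silent transition otherwise.
\end{enumerate}

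Two adjustments follow. First, your simulated marking of $N_i$ must keep $p_e$ marked after successors consume the exits, rather than reading $p_e$ off $N$'s current marking as in the proof of \cref{thm:po_soundness_preservation}. Second, the invariant for the second inclusion should be stated per interface place and should include the images of $p_s$ and $p_e$ for parts that are mid-run. The paper's sketch does not cover this case either: its case split in (1.3) only considers re-enabling via $\Pre{T_i}$.
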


\begin{proof}
Let $N_i = \poproject(N, T_i)$ for each $i$ ($1 \leq i \leq n$). By combining the semantics of partial orders (cf. \cref{def:ext_sem}) with the assumption that $\lang(\powl_i) = \lang(N_i)$ for each $i$ ($1 \leq i \leq n$), we can write:
\[
\lang(\po(\powl_1, \ldots, \powl_n)) = \{\sigma \in \shuffle_{\po}(\sigma_1 , ..., \sigma_n) \ | \ \forall_{1 \leq i \leq n} \sigma_i \in \lang(N_i)\}.
\]

\textbf{(1) Proof for $\lang(N) \subseteq \{\sigma \in \shuffle_{\po}(\sigma_1 , ..., \sigma_n) \ | \ \forall_{1 \leq i \leq n} \sigma_i \in \lang(N_i)\}$:} Let $\sigma \in \lang(N)$ be any firing sequence of $N$. We construct subsequences $\sigma_1, ..., \sigma_n$ by projecting $\sigma$ onto $T_i$ for each $i$ ($1 \leq i \leq n$). We need to show that $\sigma$ can be expressed as a shuffle of $\sigma_1, ..., \sigma_n$, respecting the partial order $\po$. This can be derived by proving the following three key points:

\begin{itemize}
    \item All parts $T_i \in G$ are present within $\sigma$.
    \item Each $\sigma_i$ is a firing sequence from $N_i$ (i.e., $\sigma_i \in \lang(N_i)$). 
    \item The partial order between the subsequences is preserved in $\sigma$  (i.e., $\sigma \in \shuffle_{\po}(\sigma_1 , ..., \sigma_n)$).
\end{itemize}

\textbf{(1.1) All parts $T_i \in G$ are present within $\sigma$:}
Assume, for the sake of contradiction, that there exists a part $T_i \in G$ such that no transitions $t \in T_i$ are fired during the execution of $\sigma$. Since $N$ is a sound WF-net, skipping a subnet requires a decision point (XOR-split) upstream that bypasses $T_i$. However, Condition 1 of \cref{def:po_pattern} states that for any place $p$ connecting parts, $|\{T_k \in G \mid p \in \Pre{T_k}\}| \leq 1$. This means flow cannot diverge to mutually exclusive parts at the top level.

\textbf{(1.2) 
Each $\sigma_i$ is a firing sequence from $N_i$:}
The single entry and exit properties (cf. \cref{def:po_pattern}) ensure each subnet is executed independently from start to end within $N$. Therefore, $\sigma_i$ must be a firing sequence from $N_i$.

\textbf{(1.3) The partial order $\po$ is preserved in $\sigma$:} 
Assume $i \po j$ for any $i, j \in \{1, \dots, n\}$. Since $\po = \closure{\mathit{order}}(N, G)$, transitions from $T_i$ are executed first, producing tokens that are needed to eventually enable $T_j$. Suppose, for the sake of contradiction, that after the execution of transitions from $T_j$, transitions from $T_i$ are re-enabled (i.e., tokens are produced in $\Pre{T_i}$). Then we have two possible scenarios:
    \begin{itemize}
        \item (i) The re-enabling of $T_i$ does not depend on the completion of $T_j$ (i.e., it does not require the consumption of tokens from $\Post{T_j}$): This means that we can perform a full execution of the subnet of $T_i$ and reach the subnet of $T_j$ again before its completion, violating safeness.
        \item (ii) The re-enabling of $T_i$ depends on the completion of $T_j$ (i.e., it requires the consumption of tokens from $\Post{T_j}$): This implies the existence of a sequence of dependencies in the execution order $j \po \dots \po i$. By transitivity, $j \po i$ holds. This violates the asymmetry requirement of partial orders since $i \po j$.
    \end{itemize}

\textbf{(2) Proof for $\{\sigma \in \shuffle_{\po}(\sigma_1 , ..., \sigma_n) \ | \ \forall_{1 \leq i \leq n} \sigma_i \in \lang(N_i)\} \subseteq \lang(N)$:}
Consider any sequence $\sigma \in \shuffle_{\po}(\sigma_1 , ..., \sigma_n)$ where $\sigma_i \in \lang(N_i)$ for $1 \leq i \leq n$. We showed that all parts $T_i \in G$ must be visited in $N$ (cf. the proof of 1.1). Due to the single entry and exit properties in $N$ (cf. \cref{def:po_pattern}), each subnet can be executed independently in $N$, without violating the execution order $\po = \closure{\mathit{order}}(N, G)$. Therefore, the interleaved sequence $\sigma$ constitutes a valid firing sequence in $N$. 

\end{proof}

\begin{lemma}[Choice Graph Pattern Language Preservation]\label{thm:cg_lang_preservation}
Let $N = (P, T, F)$ be a safe and sound WF-net. Let $G = \{T_1, \ldots, T_n\}$ be a concurrency-hiding partition of $T$ and $\cg = \mathit{flow}(N, G)$. Let $\powl_1, \ldots, \powl_n$ be POWL models such that $\lang(\cgproject(N, T_i)) \allowbreak = \lang(\powl_i)$ for each $i$ ($1 \leq i \leq n$). Then, $\lang(\cg(\powl_1, \ldots, \powl_n)) = \lang(N)$.
\end{lemma}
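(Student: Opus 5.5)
We argue by double inclusion, in the same style as \cref{thm:po_lang_preservation}. Write $N_i = \cgproject(N, T_i)$, and let $p^i_{in}$ and $p^i_{out}$ be the unique entry and exit places of $T_i$. By \cref{def:ext_sem} and the hypothesis $\lang(\powl_i) = \lang(N_i)$,
\[
\lang(\cg(\powl_1, \ldots, \powl_n)) = \bigcup_{\langle i_1, \ldots, i_k\rangle \in \paths{\cg}} \lang(N_{i_1}) \cdot \ldots \cdot \lang(N_{i_k}).
\]
The central auxiliary claim is a \emph{single-token invariant}. For every marking $M$ reachable in $N$, exactly one of the following holds:
\begin{itemize}
\item $M = [p]$ for some place $p$ that is $N_{source}$, $N_{sink}$, or an entry/exit place of some part; or
\item all tokens of $M$ lie in $P\project{T_i}$ for a single part $T_i$, and the restriction of $M$ is reachable in $N_i$ from $[p^i_{in}]$.
\end{itemize}
We prove it by induction on the length of the firing sequence. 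The base case is $[N_{source}]$, and $N_{source}$ is an entry place because it is in $\Pre{T_i}$ for every part touching it.

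For the inductive step, consider a transition $t \in T_i$. It consumes only from places of $P\project{T_i}$. Any place in $\post{t}$ that is adjacent to another part is, by definition, in $\Post{T_i}$, so it equals $p^i_{out}$. Hence tokens never leak into two parts at once.

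The delicate case is when $t$ puts a token into $p^i_{out}$. We must show the rest of the marking is then empty, which is what lets control pass cleanly to the next part. We use option to complete together with safeness of $N$. If other tokens remained inside $T_i$, note that any continuation out of $p^i_{out}$ into another part $T_j$, or to $N_{sink}$, can only be finished by eventually consuming those leftover tokens. But the only way out of $T_i$ is through $p^i_{out}$, so this would either produce a second token in $p^i_{out}$ (violating safeness) or leave tokens behind at $N_{sink}$ (violating proper completion). \cref{thm:cg_soundness_preservation} gives that $N_i$ is sound, so reaching $[p^i_{out}]$ in $N_i$ is exactly completion of $N_i$. I expect this step to be the main obstacle. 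One subtlety needs separate handling: $p^i_{out}$ may also be an internal place of $T_i$, feeding back into $T_i$. Such loops are exactly the self-edges $(i,i)$ in $\mathit{flow}(N,G)$, and the normalization in $\cgproject$ handles them by adding silent transitions.

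\textbf{Inclusion $\lang(N) \subseteq \lang(\cg(\ldots))$.} Take a firing sequence $\sigma$ from $[N_{source}]$ to $[N_{sink}]$. By the invariant, the moments at which the marking is a single boundary token cut $\sigma$ into consecutive blocks, each block using transitions of one part $T_{i_j}$ only. Each block leads from $[p^{i_j}_{in}]$ to $[p^{i_j}_{out}]$, so its visible trace lies in $\lang(N_{i_j})$. Consecutive blocks share a place in $\Post{T_{i_j}} \cap \Pre{T_{i_{j+1}}}$, which is an edge of $\mathit{flow}(N,G)$. The first block starts at $N_{source}$ and the last ends at $N_{sink}$, which gives the edges from $\xorstart{}$ and to $\xorend{}$. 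So $\langle i_1, \ldots, i_k\rangle \in \paths{\cg}$.

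\textbf{Inclusion $\lang(\cg(\ldots)) \subseteq \lang(N)$.} Given a path $\langle i_1, \ldots, i_k\rangle \in \paths{\cg}$ and runs $\sigma_j$ of $N_{i_j}$, observe the following. Since $|\Pre{T_{i_{j+1}}}| = 1$, the place shared with $T_{i_j}$ is exactly $p^{i_{j+1}}_{in}$, and likewise it is $p^{i_j}_{out}$. Each $\sigma_j$ (with the normalization silent transitions deleted) is firable in $N$ from $[p^{i_j}_{in}]$. The reason is that the arcs of $N$ restricted to $P\project{T_{i_j}}$ coincide with those of $N_{i_j}$, and no other tokens are present. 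This run ends in $[p^{i_j}_{out}] = [p^{i_{j+1}}_{in}]$. Concatenating these runs yields a run of $N$ from $[N_{source}]$ to $[N_{sink}]$ with the required trace.
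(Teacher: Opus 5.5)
Your route is the paper's. Both proofs use double inclusion: a complete run is cut into per-part blocks that follow a path of $\mathit{flow}(N,G)$, and conversely runs of the $N_i$ are chained through the shared interface places. Your converse direction is the paper's initialization/chaining/termination argument and is fine. The paper only asserts that the top level ``behaves as a state machine''. Your single-token invariant is the right way to make that precise, but your argument for its delicate case does not go through as written.

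The two alternatives you offer, ``a second token in $p^i_{out}$ (violating safeness)'' and ``tokens left behind at $N_{sink}$'', do not cover all cases. The leftover tokens $X$ could put a second token into $p^i_{out}$ only after the first has been consumed by another part, so no single marking is unsafe. Or they could wait inside $T_i$ and later synchronize with a token re-entering through $p^i_{in}$, so nothing ever leaves $T_i$ separately. Ruling these out needs more. The first needs an interleaving argument. The second needs the monotonicity fact that the reachable marking $[p^i_{in}]$ at the start of the visit cannot lie strictly below a reachable $[p^i_{in}]+Y$; otherwise the completing sequence of $[p^i_{in}]$ would end in $[N_{sink}]+Y$.

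The short repair uses the lemma you cite but do not actually apply here. By your invariant, $[p^i_{out}]+X$ is reachable in $N_i$, and $N_i$ is sound by \cref{thm:cg_soundness_preservation}. If $p^i_{out}$ is the sink of $N_i$, this contradicts proper completion directly. Otherwise, firing the silent normalization transition from $p^i_{out}$ to the fresh sink gives a marking with a sink token plus $X$, which again contradicts proper completion. Hence $X=\emptyset$.

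Your remark that loops through $p^i_{out}$ ``are exactly the self-edges $(i,i)$'' is false: $(i,i)$ is an edge iff $p^i_{in}=p^i_{out}$. Take $s\colon i\to e$, $a\colon e\to x$, $b\colon x\to e$, $c\colon x\to o$ with parts $\{s\},\{a,b\},\{c\}$. This partition is concurrency-hiding, the part $\{a,b\}$ has entry $e$ and exit $x$, and $\mathit{flow}(N,G)$ has no self-edge. Yet in the run $s\,a\,b\,a\,c$ the marking $[x]$ occurs in the middle of a single visit to $\{a,b\}$. Cutting at every single-boundary-token marking would therefore produce blocks such as $\langle b\rangle$ from $[x]$ to $[e]$. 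Such a block is not a run of $N_i$ from its entry, and no edge links it to the preceding block.

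You must cut only where the part changes. A maximal block of $T_i$ then ends in $[p^i_{out}]$, because after its first transition all its tokens were produced by $T_i$. The only place that receives tokens from $T_i$ and is either consumed by another part or equal to $N_{sink}$ is $p^i_{out}$. This is the correct version of your claim that every place in $\post{t}$ adjacent to another part lies in $\Post{T_i}$. As stated, that claim fails for $p^i_{in}$ when it receives tokens from both $T_i$ and another part, as $e$ does in the example.
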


\begin{proof}
Let $N_i = \cgproject(N, T_i)$ for each $i \in \{1, \dots, n\}$. Based on the semantics of choice graphs (\cref{def:ext_sem}) and the assumption $\lang(\powl_i) = \lang(N_i)$, we have:
\[
\lang(\cg(\powl_1, \dots, \powl_n)) = \bigcup_{\langle p_1, \dots, p_k \rangle \in \paths{\cg}} \lang(N_{p_1}) \cdot \dots \cdot \lang(N_{p_k}).
\]

\textbf{(1) Proof for $\lang(N) \subseteq \lang(\cg(\powl_1, \dots, \powl_n))$:} 
Let $\sigma \in \lang(N)$ be a firing sequence. Since $G$ is a concurrency-hiding partition (\cref{def:cg_pattern}), every part $T_i$ acts as an SESE component and the high-level control flow between these parts in $N$ behaves as a state machine. Consequently, $\sigma$ can be uniquely partitioned into a sequence of sub-traces $\sigma = \sigma_1 \cdot \dots \cdot \sigma_k$ such that each $\sigma_j$ is a firing sequence of some subnet $N_{p_j}$ where $T_{p_j} \in G$. For each $j \in \{1, \dots, k-1\}$, the transition from the execution of $T_{p_j}$ to $T_{p_{j+1}}$ implies that a token was passed from the unique exit point $\Post{T_{p_j}}$ to the unique entry point $\Pre{T_{p_{j+1}}}$. By \cref{def:flow_graph}, this corresponds to an edge $(p_j, p_{j+1})$ in $\cg$. Similarly, the first part $T_{p_1}$ must be reachable from $\xorstart{}$ (since $N_{source} \in \Pre{T_{p_1}}$) and the last part $T_{p_k}$ must lead to $\xorend{}$ (since $N_{sink} \in \Post{T_{p_k}}$). Thus, the sequence $\langle p_1, \dots, p_k \rangle$ is a valid path in $\paths{\cg}$. 

\textbf{(2) Proof for $\lang(\cg(\powl_1, \dots, \powl_n)) \subseteq \lang(N)$:}
Let $\sigma \in \lang(\cg(\powl_1, \dots, \powl_n))$. By definition, there exists a path $\langle p_1, \dots, p_k \rangle \in \paths{\cg}$ such that $\sigma = \sigma_1 \cdot \dots \cdot \sigma_k$ with $\sigma_j \in \lang(N_{p_j})$.

We analyze the token flow in $N$ through the partition interfaces:
\begin{itemize}
    \item \textbf{Initialization:} The path starts with $(\xorstart{}, p_1)$, implying $N_{source} \in \Pre{T_{p_1}}$ (by \cref{def:flow_graph}). By \cref{def:cg_pattern}, $|\Pre{T_{p_1}}| = 1$, so $N_{source}$ is the unique entry place for $T_{p_1}$. Since $\sigma_1 \in \lang(N_{p_1})$, firing $\sigma_1$ in $N$ consumes the token from $N_{source}$ and produces a token in the unique exit place $\Post{T_{p_1}}$.
    
    \item \textbf{Chaining:} For any step $j$ ($1 \le j < k$), the edge $(p_j, p_{j+1})$ exists in $\cg$. This implies $\Post{T_{p_j}} \cap \Pre{T_{p_{j+1}}} \neq \emptyset$ (by \cref{def:flow_graph}). 
    By \cref{def:cg_pattern}, $|\Post{T_{p_j}}| = 1$ and $|\Pre{T_{p_{j+1}}}| = 1$. Therefore, the unique exit place of $T_{p_j}$ is identical to the unique entry place of $T_{p_{j+1}}$. Let this place be $p_{link}$.
    After $\sigma_j$ fires, a token resides in $p_{link}$. Since $\sigma_{j+1} \in \lang(N_{p_{j+1}})$, it is enabled by this token and fires, moving the token to $\Post{T_{p_{j+1}}}$.
    
    \item \textbf{Termination:} The path ends with $(p_k, \xorend{})$, implying $N_{sink} \in \Post{T_{p_k}}$ (by \cref{def:flow_graph}). Since $|\Post{T_{p_k}}| = 1$, the final token produced by $\sigma_k$ lands in $N_{sink}$.
\end{itemize}

Since the sequence $\sigma$ continuously transforms the marking from $[N_{source}]$ to $[N_{sink}]$ without leaving tokens behind (safeness and soundness are preserved locally by projection), $\sigma$ is a valid firing sequence in $N$. Thus, $\sigma \in \lang(N)$.
\end{proof}

\subsection{Overall Correctness Guarantee}\label{sec:gua:lang}
In this section, we prove the correctness of \cref{alg:convert_pn_to_powl_v2}. Specifically, we show that if the algorithm natively produces a POWL model (i.e., it decomposes the net without reverting to the fall-through at any level of recursion), then the generated model preserves the language of the original WF-net.

\begin{theorem}[Correctness]\label{thm:main_correctness_compact}
Let $N = (P, T, F)$ be a safe and sound WF-net. If \cref{alg:convert_pn_to_powl_v2} successfully converts $N$ into a POWL model $\powl$ without invoking the fall-through, then $\lang(N) = \lang(\powl)$.
\end{theorem}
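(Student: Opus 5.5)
We prove the statement by well-founded induction on the recursion tree of \cref{alg:convert_pn_to_powl_v2}. Because $\powl$ is produced without invoking the fall-through, the sequence of recursive calls generated by $\mathit{ConvertNetToPOWL}(N)$ forms a finite tree. Every leaf of this tree is a base case, and every internal node is a successful partial order or choice graph decomposition. Finiteness holds because $\powl$ is a finite term that was actually returned. We therefore do not need a separate termination measure, although the progress check $\nexists_{T_i \in G} N_i \cong N$ explains why the recursion cannot cycle. The statement to prove by induction on the depth of the call for a net $N'$ is this: if $N'$ is a safe and sound WF-net and the call on $N'$ returns $\powl'$ without fall-through, then $\lang(N') = \lang(\powl')$.

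\textbf{Base case.} Here $N'$ consists only of $N'_{source} \to t \to N'_{sink}$. Its only complete firing sequence is $\langle t\rangle$, whose visible trace is $\langle \lab(t)\rangle$, or $\langle\rangle$ if $\lab(t)=\tau$. This agrees with $\lang(t)$ in \cref{def:ext_sem}.

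\textbf{Partial order step.} Suppose the call returns $\po(\powl_1,\dots,\powl_n)$ with $G$ conflict-hiding and $\po = \closure{\mathit{order}}(N',G)$. By \cref{thm:po_soundness_preservation}, each $\poproject(N',T_i)$ is a safe and sound WF-net. Each recursive call on these subnets returned $\powl_i$ without fall-through, so the induction hypothesis gives $\lang(\poproject(N',T_i)) = \lang(\powl_i)$. These are exactly the premises of \cref{thm:po_lang_preservation}, which yields $\lang(\po(\powl_1,\dots,\powl_n)) = \lang(N')$.

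\textbf{Choice graph step.} The argument is the same, with $G$ concurrency-hiding and $\cg = \mathit{flow}(N',G)$. We use \cref{thm:cg_soundness_preservation} to obtain valid subnets, the induction hypothesis to obtain $\lang(\cgproject(N',T_i)) = \lang(\powl_i)$, and \cref{thm:cg_lang_preservation} to conclude.

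Two side conditions need checking. First, $\po$ must be a strict partial order, so that $\po(\powl_1,\dots,\powl_n)$ is a legitimate POWL model. The paper argues this in \cref{sec:order_po}: a cycle in $\mathit{order}$ would force a deadlock, contradicting soundness. Second, $\mathit{flow}(N',G)$ must satisfy \cref{def:choice_graph}, i.e., it must have a unique start and end with every node on a start–end path. This follows from the connectivity of $N'$ as a WF-net together with $|\Pre{T_i}|=|\Post{T_i}|=1$.

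The main obstacle is not the induction, which is routine once the lemmas are available. It is ensuring that the language notions line up exactly. $\lang(N)$ must be read as the set of labelled complete firing sequences from $[N_{source}]$ to $[N_{sink}]$, with $\tau$-transitions erased. The silent transitions introduced by normalization must then contribute nothing. Since they are labelled $\tau$, they are erased in the projections, and the lemmas are stated at this level. Our plan is to make this interpretation explicit at the start, note that the lemmas are used exactly in that form, and then chain them as described above.
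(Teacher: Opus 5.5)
Your proof is correct and has the same skeleton as the paper's. Both use induction with a direct base case, and both discharge each decomposition step by chaining \cref{thm:po_soundness_preservation} (resp.\ \cref{thm:cg_soundness_preservation}), the induction hypothesis on the children, and \cref{thm:po_lang_preservation} (resp.\ \cref{thm:cg_lang_preservation}). The genuine difference is the induction measure.

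The paper inducts on $|T|$ and asserts that each projection $N_i$ has fewer transitions than $N$. The projection lemmas do not supply this, and the algorithm does not guarantee it, because $\mathit{Normalize}$ inserts fresh silent transitions. The paper's own discussion of \cref{fig:neg_ex:5} shows a choice-graph projection that is isomorphic to $N$, hence exactly as large. Had the isolated final transition there been visible, the progress check $\nexists_{T_i \in G}\, N_i \cong N$ would pass, since isomorphism respects labels. The algorithm would then recurse on a child with $|T|$ transitions. The paper does not argue that the decrease holds along successful runs.

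Your induction over the finite recursion tree of a successful run needs no decreasing size, so it avoids this issue entirely. What it gives up is termination information, which the $|T|$-measure would have provided had the decrease held. The theorem, however, only asserts partial correctness.

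Two small points. First, state the measure as the height of a call's subtree, or the size of the returned term, rather than its ``depth'', since children sit deeper in the tree than their parent. Second, your explicit reading of $\lang(N)$ as label sequences with $\tau$ erased is a needed clarification rather than a formality. The normalization transitions are not in $T$, the paper never defines $\lang(N)$, and its lemma proofs project raw firing sequences onto $T_i$.
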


\begin{proof}

If no pattern is detected at any level of recursion, the algorithm invokes the fall-through. In such cases, the premise of the theorem is not satisfied, and the statement holds. 

Now, assume that the fall-through is not invoked at any level of recursion. We prove the theorem by induction on the number of transitions $|T|$.

\begin{itemize}

\item \textbf{Base case:} The theorem trivially holds for a WF-net that contains a single transition.

\item \textbf{Inductive hypothesis:} For $n> 1$, assume the theorem holds for all safe and sound WF-net with fewer transitions than $n$ (i.e., $|T| < n$).

\item \textbf{Inductive step ($|T| = n$):} We distinguish two cases based on the decomposition pattern identified:

\begin{itemize}

    \item \textbf{Partial Order:} Suppose a conflict-hiding partition $G = \{T_1, \dots, T_n\}$ is identified. For each $T_i \in G$, $N$ is projected onto $T_i$ to obtain $N_i = \poproject(N, T_i)$.
     By \cref{thm:po_soundness_preservation}, each $N_i$ is a safe and sound WF-net with fewer transitions than $n$.
     By the inductive hypothesis, the POWL model $\powl_i$ obtained from $N_i$ satisfies $\lang(\powl_i) = \lang(N_i)$. The algorithm returns $\powl = \po(\powl_1, \dots, \powl_n)$ where $\po =\closure{\mathit{order}}(N, G)$.
     By \cref{thm:po_lang_preservation}, $\lang(\powl) = \lang(N)$.
    
     \item \textbf{Choice Graph:} Suppose a concurrency-hiding partition is identified. The proof is analogous to the partial order case, using \cref{thm:cg_soundness_preservation} for the structural guarantees and \cref{thm:cg_lang_preservation} for the language equivalence.

\end{itemize}

\end{itemize}

By induction, the theorem holds for all safe and sound WF-nets successfully converted by \cref{alg:convert_pn_to_powl_v2}.

\end{proof}

\subsection{Completeness Guarantee on Separable WF-Nets} \label{sec:gua:redisc}
In this section, we show that \cref{alg:convert_pn_to_powl_v2} is complete when applied to sound and safe separable WF-nets.

\begin{theorem}[Completeness]\label{thm:complete}
Let $N$ be a separable WF-net. If $N$ is safe and sound, then \cref{alg:convert_pn_to_powl_v2} successfully converts $N$ into a POWL 2.0 model without invoking the fall-through. 
\end{theorem}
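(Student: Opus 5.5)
We argue by strong induction on the number of transitions $|T|$ of a safe and sound separable WF-net $N$, and show that some branch of \cref{alg:convert_pn_to_powl_v2} other than the fall-through always applies. For $|T|=1$ the net is a single transition between source and sink, possibly with extra source and sink places created by normalization. Soundness and the WF-net conditions force the base-case shape, or a short sequence that both decompositions handle trivially.

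For $|T|>1$ we fix a \emph{minimal} construction tree of $N$ according to \cref{def:separable}. Its root is either a marked-graph or state-machine WF-net template $N_0$ with transitions $u_1,\dots,u_m$, where each $u_k$ has been replaced by a separable sub-WF-net $N^{(k)}$. A substituted $u_k$ contributes the transition set $U_k\subseteq T$. We choose the root template maximally, by merging nested templates of the same kind. Then for $m\ge 2$ the partition $\{U_1,\dots,U_m\}$ is a candidate decomposition. If the template is a state machine, every $U_k$ has exactly one entry place and one exit place in $N$, which are the images of $\pre{u_k}$ and $\post{u_k}$; hence the partition is concurrency-hiding. If the template is a marked graph, every place of $N_0$ connects at most one pair of parts, so Conditions 1--2 of \cref{def:po_pattern} hold. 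Conditions 3--4 hold because substitution connects all entry places of $U_k$ to the same set $\post{N^{(k)}_{source}}$, and likewise all exit places.

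The main obstacle is the step linking this ``ideal'' partition to what $\popart$ and $\cgpart$ actually return. We will show that the algorithmic partitions are \emph{coarser than or equal to} the ideal one and still nontrivial.

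In the marked-graph case, every place with $|\post{p}|>1$ lies either strictly inside some $N^{(k)}$ or in the interior of the root template. Interior places of a marked graph have a single successor, except that the source or sink may have been absorbed by substitution. So the XOR-split scope computed by $\popart$ is contained in a single $U_k$. We argue this with free-choiceness and the absence of PT/TP-handles (stated earlier for separable nets): a transition outside $U_k$ that is reachable from one branch of $p$ is reachable from every branch of $p$, since all branches exit through the common exit interface of $U_k$.

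In the state-machine case, an AND-split $t$ lies inside some $U_k$. By soundness, safeness and the single exit place of $U_k$, every thread spawned by $t$ is re-synchronized inside $U_k$ before the exit place is reached. Hence restricted reachability avoiding $t$ separates only transitions of $U_k$, and all transitions outside $U_k$ are reachable from every branch or from none. The backward analysis is symmetric.

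Given this, the algorithmic partition is a refinement-compatible subpartition, and we check the interface conditions on it directly; when in doubt, we fall back to showing that the algorithm's output equals the ideal partition. If the marked-graph attempt is blocked, we argue that the root template is then a state machine, so step (3) succeeds. This uses the remark that both attempts apply only to sequences.

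Progress and recursion come next. Each part is a proper subset of $T$ with $m\ge2$, so no projection has $|T|$ transitions unless normalization adds a silent transition. We must rule out $\poproject(N,T_i)\cong N$; since $|T_i|<|T|$ and normalization adds at most two silent transitions, we handle this with a counting argument, using minimality of the decomposition to exclude degenerate parts consisting only of silent transitions. Finally, each projection $\poproject(N,U_k)$ or $\cgproject(N,U_k)$ is isomorphic to $N^{(k)}$, possibly with silent normalization transitions added, and is therefore separable. By \cref{thm:po_soundness_preservation} and \cref{thm:cg_soundness_preservation} it is safe and sound, so the inductive hypothesis applies and no recursive call reaches the fall-through.
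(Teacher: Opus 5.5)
Your skeleton (induction on $|T|$, a flattened top-level template, and an ideal partition $\{U_1,\dots,U_m\}$ satisfying \cref{def:po_pattern} resp.\ \cref{def:cg_pattern}) matches the paper's. However, the step you call the main obstacle is never carried out, and it cannot be.

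\textbf{The partition step.} Your separation arguments show that every group merged by $\popart$ or $\cgpart$ stays inside one $U_k$. That makes the algorithmic partition \emph{finer} than the ideal one, not coarser as you write. A finer partition is only useful if it still meets the interface conditions, and in general it does not: once some $U_k$ is split, an internal choice place of $U_k$ becomes an entry place of two parts. What you need is ``no under-merging''. The paper asserts this through its bidirectional-sweep/sequentializability argument; you leave it to ``when in doubt''. Your fallback, that a blocked marked-graph attempt forces a state-machine root, is also unsupported.

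Both claims fail on the following separable, safe and sound net. An AND-split $s$ opens two branches that an AND-join $j$ synchronizes. One branch is a single transition $g$. The other is the state machine entered by $a$ and left by $d$, with arcs $a \to m$, $m \to b \to m' \to c \to m$, $m \to d$.
\begin{itemize}
\item Read $\TTR$ as the reflexive relation that the paper's closure definition literally yields, so $t \TTR t$. Then the forward and the backward analysis at $m$ both produce only the group $\{b,c\}$.
\item So $\popart$ returns $\{\{s\},\{a\},\{b,c\},\{d\},\{g\},\{j\}\}$. Here $m \in \Pre{\{b,c\}} \cap \Pre{\{d\}}$, which violates Condition~1.
\item Meanwhile $\cgpart$ merges all transitions into one part, so the algorithm falls through.
\end{itemize}
Under the irreflexive reading, a plain XOR inside a parallel branch is never merged, with the same effect. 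Flattening does not exclude this example: $\{b,c,d\}$ is not a sub-WF-net, because its entry place $m$ has the internal input $c$.

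\textbf{The progress and induction step.} This fails as well, so the counting argument you sketch cannot exist. Take the state-machine template with arcs $i \to \tau_x \to q$, $q \to u \to q$, $q \to \tau_y \to o$. Substitute for $u$ the marked graph in which $s$ forks $e$ and $f$ and $j$ joins them.
\begin{itemize}
\item $\cgpart$ returns exactly the ideal partition $\{\{\tau_x\},\{s,e,f,j\},\{\tau_y\}\}$, which is concurrency-hiding.
\item But $q$ is both the entry and the exit of the middle part, and it has a pre-transition and a post-transition inside that part. Normalization therefore adds two silent transitions, and $\cgproject(N,\{s,e,f,j\}) \cong N$.
\item The progress test blocks step (3). $\popart$ yields no conflict-hiding partition either (Condition~1 fails at $q$, or everything is merged under the irreflexive reading). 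So \cref{alg:convert_pn_to_powl_v2} falls through.
\end{itemize}
With visible labels instead of $\tau_x,\tau_y$, the same failure occurs one recursion level down. This also refutes the claim that every projection has fewer transitions than $N$. Your induction relies on that claim, and so does the paper's, which never discusses the isomorphism test.

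These gaps cannot be closed by following the paper's proof, because it asserts exactly these two steps without justification. The statement needs an extra hypothesis or a modified algorithm. For example, splitting such loop-head places into a join place and a split place connected by a silent transition, in the spirit of \cref{fig:prepro}, removes both examples.
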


\begin{proof}
We prove the theorem by induction on the number of transitions $|T|$ in the WF-net $N$.

\item \textbf{Base Case ($|T|=1$):}
If $N$ contains only one transition $t$, it must satisfy the structural requirements of a base case WF-net due to soundness. \cref{alg:convert_pn_to_powl_v2} identifies this and returns $t$. 

\item \textbf{Inductive hypothesis:} Assume the theorem holds for all separable WF-nets with fewer than $|T|$ transitions.

\item \textbf{Inductive step:} 
By \cref{def:separable}, $N$ is constructed hierarchically. Due to the associativity of substitution (e.g., substituting a state machine into a state machine results in a larger state machine), we can characterize $N$ by a \emph{flattened top-level decomposition}: either a flattened state machine where all sub-components are not state machines or a flattened marked graph where the sub-components are not marked graphs.

\cref{alg:convert_pn_to_powl_v2} attempts to apply both $\popart$ and $\cgpart$. We show that at least one succeeds by identifying the flattened top-level decomposition $G_{top} = \{T_1, \dots, T_n\}$.

\begin{itemize}
    \item \textbf{Case 1: $N$ is a top-level state machine.}
    The algorithm calls $\cgpart(N)$. We show that it constructs $G_{top}$
    \begin{itemize}
         \item \textbf{Separation (No over-merging):} The merging rules in $\cgpart$ are triggered strictly by concurrency patterns (AND-splits and AND-joins). Since the top-level structure of $N$ is a state machine, the algorithm never merges transitions across the boundaries of distinct parts in $G_{top}$.

         \item \textbf{Aggregation (No under-merging):} Each component $T_i$ is effectively a nested marked graph. The algorithm looks at all AND-splits and AND-joins from both ends and spans the threads between them. This bidirectional sweep ensures that all nodes of the marked graph are captured and merged into $T_i$. 
         The only scenario where this strategy would fail to unify $T_i$ is if the graph were sequentializable (i.e., if $T_i$ could be split into a sequence of two subgraphs $T_{i,1} \to T_{i,2}$ with no concurrency spanning them). However, this would violate the assumption that $G_{top}$ is the flattened top-level decomposition. If such a sequence existed, $T_{i,1}$ and $T_{i,2}$ would simply be two distinct, adjacent nodes in the top-level state machine, rather than forming a single component. Therefore, each $T_i$ in $G_{top}$ is fully captured by the algorithm.
    \end{itemize}
    
    Thus, $\cgpart(N)$ returns $G_{top}$. Since the top level is a state machine, the partition satisfies the requirement of \cref{def:cg_pattern} ($|\Pre{T_i}| = |\Post{T_i}| = 1$ for every $T_i$), making it a valid concurrency-hiding partition. 

    \item \textbf{Case 2: $N$ is a top-level marked graph.}
    The algorithm first calls $\cgpart(N)$. If this succeeds (e.g., the top level is a simple sequence that can be modeled as both a choice graph or a state machine), then $G_{top}$ is returned. Now, assume $\cgpart(N)$ was not successful. The algorithm calls $\popart(N)$. We show that it constructs $G_{top}$ in this case.

    \begin{itemize}
        \item \textbf{Separation (No over-merging):} The merging rules in $\popart$ are triggered by conflict patterns (XOR-splits/joins). Since the top-level structure is a marked graph, the algorithm never merges transitions across the boundaries of distinct parts in $G_{top}$.

        \item \textbf{Aggregation (No under-merging):} Each component $T_i$ is effectively a nested state machine. The algorithm looks at all XOR-splits and XOR-joins and ensures that all decision logic is pulled together from both ends. The only scenario where this strategy would fail to unify $T_i$ is if the graph were sequentializable (i.e., if $T_i$ could be split into a sequence of two subgraphs $T_{i,1} \to T_{i,2}$ with no choice spanning them). However, this would violate the assumption that $G_{top}$ is the flattened top-level decomposition. If such a sequence existed, $T_{i,1}$ and $T_{i,2}$ would simply be two distinct, adjacent nodes in the top-level marked graph, rather than forming a single component. Therefore, each $T_i$ in $G_{top}$ is fully captured by the algorithm.
        
    \end{itemize}
    Thus, $\popart(N)$ returns $G_{top}$. Since the top level is a marked graph, the connections between parts are purely for synchronization, meaning no place distributes tokens to multiple distinct parts or consumes token from multiple parts (Conditions 1 and 2 of \cref{def:po_pattern}). Furthermore, because $N$ is separable, each state machine sub-component $T_i$ must act as an SESE fragment (Conditions 3 and 4 of \cref{def:po_pattern}). Therefore, the partition is a valid conflict-hiding partition.

\end{itemize}
Since $N$ is separable, it satisfies at least one of these cases, ensuring the algorithm always find $G_{top}$. The algorithm projects $N$ onto each $T_i$. Since separability is preserved under projection, each $N_i$ is a separable WF-net with fewer transitions than $|T|$. By the inductive hypothesis, the recursive calls also succeed. 
\end{proof}

\paragraph{Role of Preprocessing in Extending Coverage:}
\cref{thm:complete} establishes that all safe and sound separable WF-nets can be successfully converted. Conversely, the successful termination of \cref{alg:convert_pn_to_powl_v2} (i.e., without invoking the fall-through at any level of recursion) inherently implies that the input WF-net must be separable. This is because the algorithm's recursive decomposition strategy fundamentally relies on identifying and applying patterns that align precisely with the definition of separable WF-nets: either a top-level marked graph or a top-level state machine. However, the preprocessing step described in \cref{sec:preproc} plays a crucial role in extending the practical applicability of our approach. Certain WF-nets, while not strictly separable in their initial form, can be transformed into equivalent separable structures. An example of this is shown in \cref{fig:pre_success}, where a non-separable net is successfully converted after preprocessing. This means that our overall approach can successfully convert a broader class of WF-nets than just those strictly separable, specifically encompassing models that our reduction rules can transform into \emph{semantically equivalent} separable nets.

\section{Implementation and Evaluation}\label{sec:eval}
To assess the scalability and robustness of our algorithm, we implemented it in Python and evaluated its performance against existing methods. The implementation includes a preprocessing layer based on the reduction rules illustrated in \cref{fig:prepro}. The code and data are available at \url{https://github.com/humam-kourani/WF-net-to-POWL}. We performed two distinct experiments.
\begin{itemize}
    \item In the first experiment, we focused on scalability using models of increasing size. We utilized the process tree generator from \cite{DBLP:conf/bpm/JouckD16,DBLP:journals/bise/JouckD19} to generate 1,000 process trees, which were then translated into WF-nets using PM4Py \cite{DBLP:journals/simpa/BertiZS23}. This resulted in a diverse set of WF-nets varying in size from $21$ to $370$ transitions and $15$ to $305$ places.
    \item In the second experiment, we assessed the completeness of the approach against real-world logic. We used a dataset of 1,493 real-world and synthetic sound workflow nets from \cite{Polyvyanyy2023}. This collection is composed of two subsets. It contains 493 industrial models sourced from the SAP R/3 Reference Model \cite{10.5555/265614}, representing diverse business domains, such as procurement, sales, and production. The remaining 1,000 models are synthetically generated to mimic the structural and behavioral complexity of the industrial set \cite{DBLP:journals/is/PolyvyanyyHROP24}. By utilizing this combined dataset, we ensure that our evaluation covers a wide spectrum of control-flow complexities, including both well-structured and unstructured real-life business logic.
\end{itemize}

For comparison, we also applied our previous POWL algorithm \cite{DBLP:conf/apn/KouraniPA25} and the state-of-the-art process tree converter from \cite{DBLP:journals/algorithms/ZelstL20} in both experiments. The results are summarized in \cref{tab:summary}.

\begin{table}[!t]
\centering
\caption{Performance and success rates across both experiments.}\label{tab:summary}
\smaller
\resizebox{\textwidth}{!}{
    \begin{tabular}{llcccc}
    \hline
    \textbf{Dataset} & \textbf{Converter} & \textbf{Avg Time (s)} & \textbf{Max Time (s)} & \textbf{Num. Failures} & \textbf{Success Rate} \\ \hline
    \multirow{3}{*}{PTGen (Exp. 1)} & Process Tree \cite{DBLP:journals/algorithms/ZelstL20} & 14.65 & 125.67 &  0 / 1,000  & 100\% \\
     & POWL \cite{DBLP:conf/apn/KouraniPA25} & 0.30 & 2.48 & 0 / 1,000 & 100\% \\
     & POWL 2.0 (Proposed) & \textbf{0.22} & \textbf{1.48} &  0 / 1,000 & 100\% \\ \hline
    \multirow{3}{*}{SAP R/3 (Exp. 2)} & Process Tree \cite{DBLP:journals/algorithms/ZelstL20} & 0.02 & 0.54 & 112 / 1,493 & 92.5\% \\
     & POWL \cite{DBLP:conf/apn/KouraniPA25} & 0.01 & 0.36 & 27 / 1,493& 98.2\% \\
     & POWL 2.0 (Proposed) & 0.004 & 0.05 & \textbf{0} / 1,493 & \textbf{100\%} \\ \hline
    \end{tabular}
}
\end{table}

\begin{figure}[!t]
    \centering    
    \includegraphics[width=0.8\textwidth]{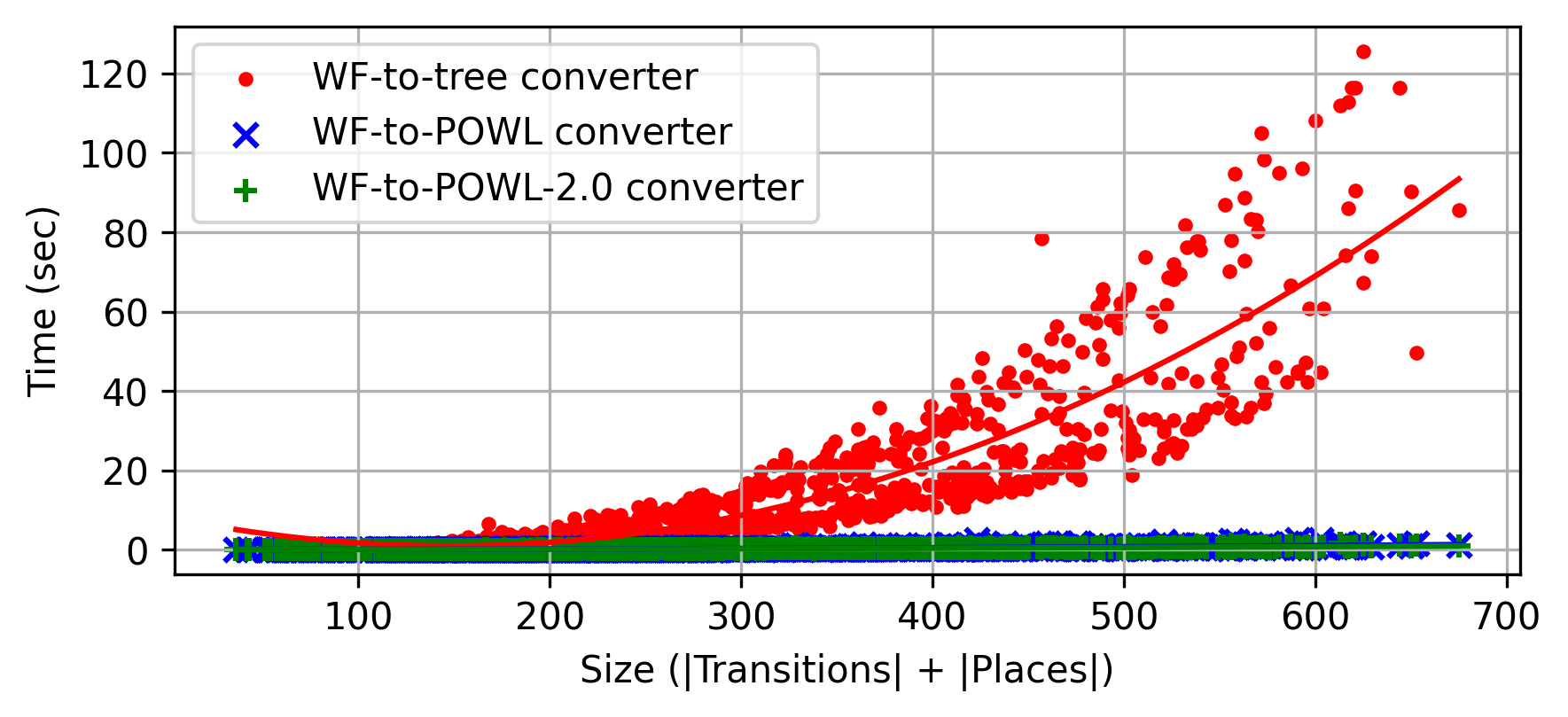}
    \caption{Comparison of conversion times between \cref{alg:convert_pn_to_powl_v2}, our previous approach from \cite{DBLP:conf/apn/KouraniPA25}, and the process tree converter from \cite{DBLP:journals/algorithms/ZelstL20} on the 1,000 WF-nets dataset.}\label{fig:ev}
\end{figure}

\paragraph*{Scalability Analysis.} In the first experiment, all three algorithms successfully converted all 1,000 WF-nets, which was expected since process trees represent a subclass of POWL. The experiment demonstrated the high scalability of our approach, as illustrated in \cref{fig:ev}. While the tree-based converter from \cite{DBLP:journals/algorithms/ZelstL20} required up to $126$ seconds for the largest nets, our approaches were significantly faster. Specifically, our previous POWL algorithm required up to $2.48$ seconds for the largest models, while our new POWL 2.0 algorithm further reduced this to $1.48$ seconds. These results highlight the efficiency of our top-down decomposition strategy over the bottom-up approach used by the process tree converter. 

\begin{figure}[!t]
    \centering    
    \begin{subfigure}{\textwidth}
        \centering
        \includegraphics[width=\textwidth]{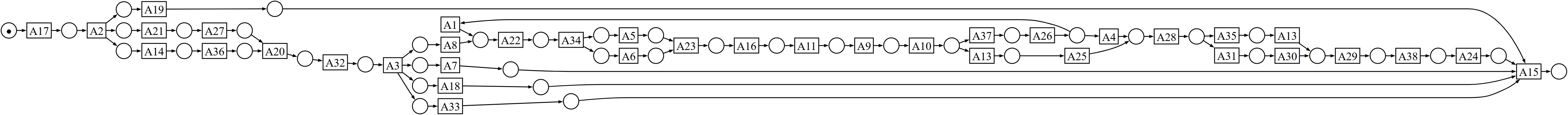}
        \caption{WF-net.}
        \label{fig:qualitative:wf}
    \end{subfigure}

    \begin{subfigure}{\textwidth}
        \centering
        \includegraphics[width=0.69\textwidth]{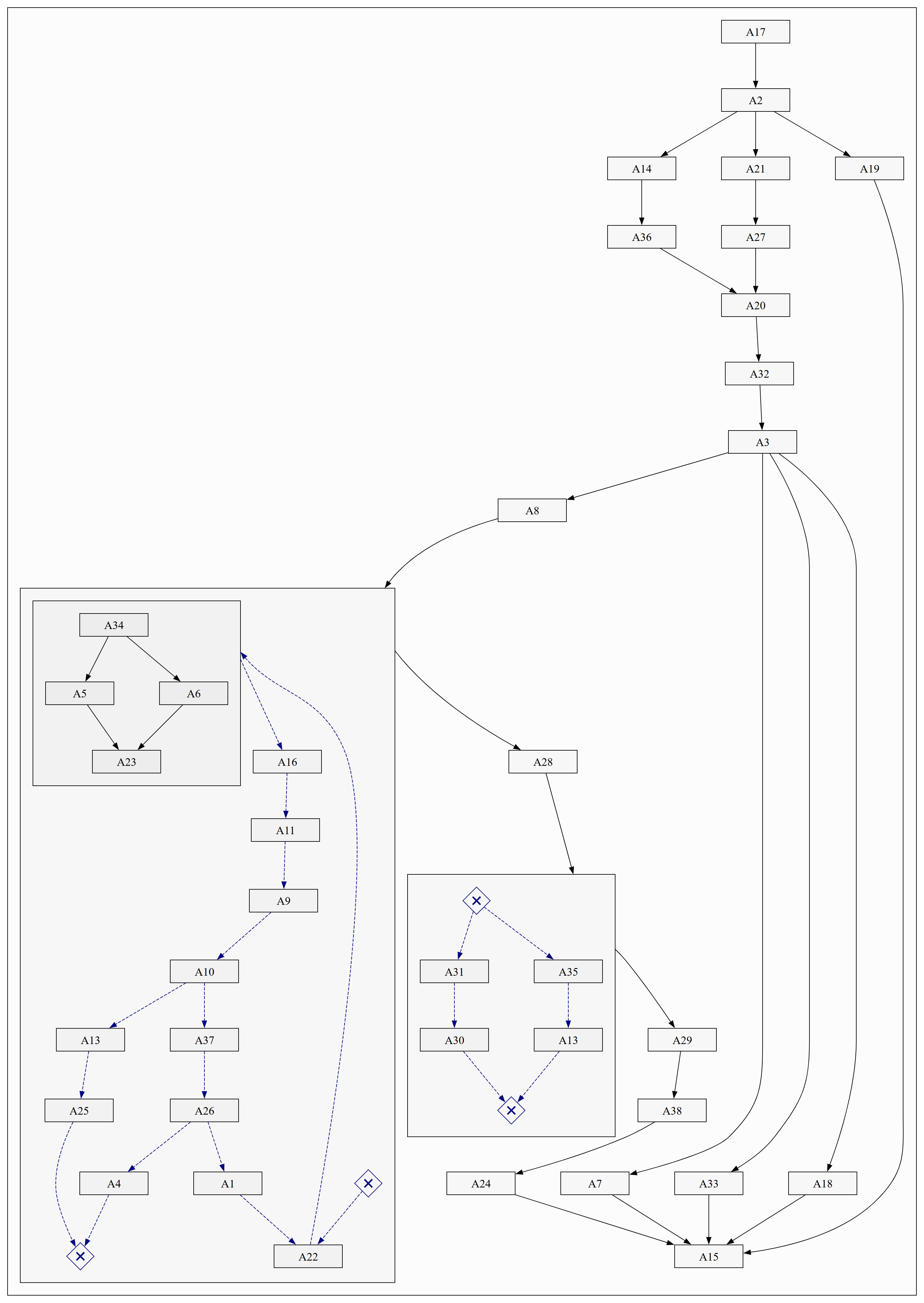}
        \caption{Converted POWL 2.0 model.}
        \label{fig:qualitative:powl}
    \end{subfigure}
    \caption{An example from the SAP R/3 dataset (process id: 48\_177) where previous approaches fail. For compactness, the original activity labels are mapped to $A_1, A_2, \dots$.\label{fig:qualitative}}
\end{figure}

\paragraph*{Practical Expressiveness.} 
The second experiment highlights the expressive power of POWL 2.0. Our proposed approach successfully converted all 1,493 industrial and synthetic models. The process tree and the previous POWL implementation converters failed on $112$ and $27$ models, respectively. As illustrated by the example in \cref{fig:qualitative}, these failures typically occur when a process exhibits non-block-structured decision logic or complex cyclic jumps that cannot be captured by the rigid operators of process trees or the previous version of POWL. Notably, while POWL 2.0 represents a formal subclass of sound WF-nets, the 100\% success rate on this dataset suggests that real-world business processes rarely exceed the complexity covered by the language. However, we acknowledge that this insight is tied to the specific dataset used and may not generalize across all domains.

In summary, our experiments demonstrate the advantage of our approach in both supporting a broader range of structures and its superior performance compared to previous approaches. Note that the implemented algorithm is also available in ProMoAI \cite{DBLP:conf/ijcai/KouraniB0A24} (\url{https://promoai.streamlit.app/}), powering the redesign feature for improving existing process models via large language models.

\section{Conclusion}\label{sec:conc}
In this paper, we presented a recursive algorithm for transforming safe and sound workflow nets into equivalent POWL 2.0 models. By leveraging the unified structural power of \emph{choice graphs} and \emph{partial orders}, the algorithm overcomes the limitations of previous block-structured approaches. Our method employs a top-down decomposition strategy that identifies conflict-hiding and concurrency-hiding partitions, allowing it to handle complex, non-block-structured decisions and cyclic logic that were previously untranslatable.

The main contribution of this work lies in bridging the gap between the theoretical advantages of the hierarchical POWL language, such as guaranteed soundness, and the practical reality where Petri nets and BPMN remain the standard. We provided formal proofs of correctness, ensuring that the generated POWL models preserve the language of the original WF-nets. Furthermore, we demonstrated that the algorithm is complete for the class of \emph{separable workflow nets}, a class that encompasses the expressive scope of POWL 2.0.

Our experimental evaluation confirmed the practical utility of the approach. The algorithm exhibited superior scalability on large process models. Moreover, the approach achieved a 100\% success rate on a large dataset of industrial business processes, demonstrating that the expressive scope of POWL 2.0 is well-suited for practical, real-world applications.

This work paves the way for broader adoption of POWL across various process mining applications. A key avenue for future research is the development of optimized analysis techniques that exploit the specific structural properties of POWL, such as efficient conformance checking algorithms similar to those developed for process trees \cite{DBLP:conf/bpm/RochaA23}. Furthermore, we plan to investigate potential extensions to the POWL language to support more complex structures beyond separable workflow nets. Finally, a critical area for advancement lies in enhancing the usability of these models through interactive visualizations that overlay performance and conformance perspectives directly onto the model structure, thereby providing a comprehensive and intuitive view of process behavior.

\textbf{Acknowledgments.} 
This work was funded by the Federal Ministry of Research, Technology and Space (BMFTR), Germany (grant 01IS23065).


\bibliographystyle{fundam}
\bibliography{references}

@inproceedings{DBLP:conf/bpmds/KouraniB0A24,
  author       = {Humam Kourani and
                  Alessandro Berti and
                  Daniel Schuster and
                  Wil M. P. van der Aalst},
  title        = {Process Modeling with Large Language Models},
  booktitle    = {Enterprise, Business-Process and Information Systems Modeling - 25th
                  International Conference, {BPMDS} 2024, and 29th International Conference,
                  {EMMSAD} 2024, Limassol, Cyprus, June 3-4, 2024, Proceedings},
  pages        = {229--244},
  year         = {2024},
  bibcrossref     = {DBLP:conf/bpmds/2024},
  doi          = {10.1007/978-3-031-61007-3\_18},
  bibsource    = {dblp computer science bibliography, https://dblp.org}
}

@inproceedings{DBLP:conf/bpm/KouraniZ23,
  author       = {Humam Kourani and
                  Sebastiaan J. van Zelst},
  title        = {{POWL:} Partially Ordered Workflow Language},
  booktitle    = {Business Process Management - 21st International Conference, {BPM}
                  2023, Utrecht, The Netherlands, September 11-15, 2023, Proceedings},
  pages        = {92--108},
  year         = {2023},
  bibcrossref     = {DBLP:conf/bpm/2023},
  doi          = {10.1007/978-3-031-41620-0\_6},
  bibsource    = {dblp computer science bibliography, https://dblp.org}
}

@inproceedings{DBLP:conf/ijcai/KouraniB0A24,
  author       = {Humam Kourani and
                  Alessandro Berti and
                  Daniel Schuster and
                  Wil M. P. van der Aalst},
  title        = {{ProMoAI}: Process Modeling with Generative {AI}},
  booktitle    = {Proceedings of the Thirty-Third International Joint Conference on
                  Artificial Intelligence, {IJCAI} 2024, Jeju, South Korea, August 3-9,
                  2024},
  pages        = {8708--8712},
  year         = {2024},
  bibcrossref     = {DBLP:conf/ijcai/2024},
  bibsource    = {dblp computer science bibliography, https://dblp.org}
}

@book{DBLP:series/lnbip/Leemans22,
  author       = {Sander J. J. Leemans},
  title        = {Robust Process Mining with Guarantees - Process Discovery, Conformance
                  Checking and Enhancement},
  series       = {Lecture Notes in Business Information Processing},
  volume       = {440},
  publisher    = {Springer},
  year         = {2022},
  doi          = {10.1007/978-3-030-96655-3},
  isbn         = {978-3-030-96654-6},
  bibsource    = {dblp computer science bibliography, https://dblp.org}
}

@article{DBLP:journals/simpa/BertiZS23,
  author       = {Alessandro Berti and
                  Sebastiaan J. van Zelst and
                  Daniel Schuster},
  title        = {{PM4Py}: {A} process mining library for Python},
  journal      = {Softw. Impacts},
  volume       = {17},
  pages        = {100556},
  year         = {2023},
  doi          = {10.1016/J.SIMPA.2023.100556},
  bibsource    = {dblp computer science bibliography, https://dblp.org}
}

@article{DBLP:journals/is/KouraniZSA25,
  author       = {Humam Kourani and
                  Sebastiaan J. van Zelst and
                  Daniel Schuster and
                  Wil M. P. van der Aalst},
  title        = {Discovering partially ordered workflow models},
  journal      = {Inf. Syst.},
  volume       = {128},
  pages        = {102493},
  year         = {2025},
  doi          = {10.1016/J.IS.2024.102493},
  bibsource    = {dblp computer science bibliography, https://dblp.org}
}

@incollection{DBLP:books/el/15/RosingWCM15,
  author       = {Mark von Rosing and
                  Stephen White and
                  Fred Cummins and
                  Henk de Man},
  editor       = {Mark von Rosing and
                  Henrik von Scheel and
                  August{-}Wilhelm Scheer},
  title        = {Business Process Model and Notation - {BPMN}},
  booktitle    = {The Complete Business Process Handbook: Body of Knowledge from Process
                  Modeling to BPM, Volume {I}},
  pages        = {429--453},
  publisher    = {Morgan Kaufmann/Elsevier},
  address      = {Massachusetts, USA},
  year         = {2015},
  doi          = {10.1016/B978-0-12-799959-3.00021-5},
  bibsource    = {dblp computer science bibliography, https://dblp.org}
}

@inproceedings{DBLP:conf/apn/KouraniPA25,
  author       = {Humam Kourani and
                  Gyunam Park and
                  Wil M. P. van der Aalst},
  editor       = {Elvio Gilberto Amparore and
                  Lukasz Mikulski},
  title        = {Translating Workflow Nets into the Partially Ordered Workflow Language},
  booktitle    = {Application and Theory of Petri Nets and Concurrency - 46th International
                  Conference, {PETRI} {NETS} 2025, Paris, France, June 22-27, 2025,
                  Proceedings},
  series       = {Lecture Notes in Computer Science},
  volume       = {15714},
  pages        = {242--264},
  publisher    = {Springer},
  year         = {2025},
  doi          = {10.1007/978-3-031-94634-9\_12},
  bibsource    = {dblp computer science bibliography, https://dblp.org}
}

@inproceedings{DBLP:conf/bpm/KouraniPA25,
  author       = {Humam Kourani and
                  Gyunam Park and
                  Wil M. P. van der Aalst},
  editor       = {Arik Senderovich and
                  Cristina Cabanillas and
                  Irene Vanderfeesten and
                  Hajo A. Reijers},
  title        = {Unlocking Non-Block-Structured Decisions: Inductive Mining with Choice
                  Graphs},
  booktitle    = {Business Process Management - 23rd International Conference, {BPM}
                  2025, Seville, Spain, August 31 - September 5, 2025, Proceedings},
  series       = {Lecture Notes in Computer Science},
  volume       = {16044},
  pages        = {144--161},
  publisher    = {Springer},
  year         = {2025},
  doi          = {10.1007/978-3-032-02867-9\_10},
  bibsource    = {dblp computer science bibliography, https://dblp.org}
}

@inproceedings{DBLP:conf/bpm/JouckD16,
  author       = {Toon Jouck and
                  Beno{\^{\i}}t Depaire},
  editor       = {Leonardo Azevedo and
                  Cristina Cabanillas},
  title        = {{PTandLogGenerator}: {A} Generator for Artificial Event Data},
  booktitle    = {Proceedings of the {BPM} Demo Track 2016 Co-located with the 14th
                  International Conference on Business Process Management {(BPM} 2016),
                  Rio de Janeiro, Brazil, September 21, 2016},
  series       = {{CEUR} Workshop Proceedings},
  volume       = {1789},
  pages        = {23--27},
  publisher    = {CEUR-WS.org},
  year         = {2016},
  bibsource    = {dblp computer science bibliography, https://dblp.org}
}

@article{DBLP:journals/bise/JouckD19,
  author       = {Toon Jouck and
                  Beno{\^{\i}}t Depaire},
  title        = {Generating Artificial Data for Empirical Analysis of Control-flow
                  Discovery Algorithms - {A} Process Tree and Log Generator},
  journal      = {Bus. Inf. Syst. Eng.},
  volume       = {61},
  number       = {6},
  pages        = {695--712},
  year         = {2019},
  doi          = {10.1007/S12599-018-0541-5},
  bibsource    = {dblp computer science bibliography, https://dblp.org}
}

@article{DBLP:journals/eor/SalimifardW01,
  author       = {Khodakaram Salimifard and
                  Mike Wright},
  title        = {Petri net-based modelling of workflow systems: An overview},
  journal      = {Eur. J. Oper. Res.},
  volume       = {134},
  number       = {3},
  pages        = {664--676},
  year         = {2001},
  doi          = {10.1016/S0377-2217(00)00292-7},
  bibsource    = {dblp computer science bibliography, https://dblp.org}
}

@article{DBLP:journals/is/FavreFV15,
  author       = {C{\'{e}}dric Favre and
                  Dirk Fahland and
                  Hagen V{\"{o}}lzer},
  title        = {The relationship between workflow graphs and free-choice workflow
                  nets},
  journal      = {Inf. Syst.},
  volume       = {47},
  pages        = {197--219},
  year         = {2015},
  doi          = {10.1016/J.IS.2013.12.004},
  bibsource    = {dblp computer science bibliography, https://dblp.org}
}

@book{desel1995free,
  title={Free choice Petri nets},
  author={Desel, Jorg and Esparza, Javier},
  number={40},
  year={1995},
  publisher={Cambridge university press}
}

@proceedings{DBLP:conf/ac/1996petri1,
  editor       = {Wolfgang Reisig and
                  Grzegorz Rozenberg},
  title        = {Lectures on Petri Nets {I:} Basic Models, Advances in Petri Nets,
                  the volumes are based on the Advanced Course on Petri Nets, held in
                  Dagstuhl, September 1996},
  series       = {Lecture Notes in Computer Science},
  volume       = {1491},
  publisher    = {Springer},
  year         = {1998},
  doi          = {10.1007/3-540-65306-6},
  isbn         = {3-540-65306-6},
  bibsource    = {dblp computer science bibliography, https://dblp.org}
}

@inproceedings{DBLP:conf/ac/Berthelot86,
  author       = {G{\'{e}}rard Berthelot},
  editor       = {Wilfried Brauer and
                  Wolfgang Reisig and
                  Grzegorz Rozenberg},
  title        = {Transformations and Decompositions of Nets},
  booktitle    = {Petri Nets: Central Models and Their Properties, Advances in Petri
                  Nets 1986, Part I, Proceedings of an Advanced Course, Bad Honnef,
                  Germany, 8-19 September 1986},
  series       = {Lecture Notes in Computer Science},
  volume       = {254},
  pages        = {359--376},
  publisher    = {Springer},
  year         = {1986},
  doi          = {10.1007/BFB0046845},
  bibsource    = {dblp computer science bibliography, https://dblp.org}
}

@InProceedings{10.1007/BFb0016204,
    author="Berthelot, G{\'{e}}rard
    and Lri-Iie",
    editor="Rozenberg, G.",
    title="Checking properties of nets using transformations",
    booktitle="Advances in Petri Nets 1985",
    year="1986",
    publisher="Springer Berlin Heidelberg",
    address="Berlin, Heidelberg",
    pages="19--40",
    isbn="978-3-540-39822-6"
}

@article{DBLP:journals/pieee/Murata89,
  author       = {Tadao Murata},
  title        = {Petri nets: Properties, analysis and applications},
  journal      = {Proc. {IEEE}},
  volume       = {77},
  number       = {4},
  pages        = {541--580},
  year         = {1989},
  doi          = {10.1109/5.24143},
  bibsource    = {dblp computer science bibliography, https://dblp.org}
}

@article{gardner2003uml,
  title={{UML} modelling of automated business processes with a mapping to {BPEL4WS}},
  author={Gardner, Tracy},
  journal={Orientation and Web Services},
  volume={30},
  year={2003}
}

@inproceedings{DBLP:conf/apn/LangnerSW98,
  author       = {Peter Langner and
                  Christoph Schneider and
                  Joachim Wehler},
  editor       = {J{\"{o}}rg Desel and
                  Manuel Silva Su{\'{a}}rez},
  title        = {Petri Net Based Certification of Event-Driven Process Chains},
  booktitle    = {Application and Theory of Petri Nets 1998, 19th International Conference,
                  {ICATPN} '98, Lisbon, Portugal, June 22-26, 1998, Proceedings},
  series       = {Lecture Notes in Computer Science},
  volume       = {1420},
  pages        = {286--305},
  publisher    = {Springer},
  year         = {1998},
  doi          = {10.1007/3-540-69108-1\_16},
  bibsource    = {dblp computer science bibliography, https://dblp.org}
}

@inproceedings{DBLP:conf/apn/Dal-Zilio20,
  author       = {Silvano Dal{-}Zilio},
  editor       = {Ryszard Janicki and
                  Natalia Sidorova and
                  Thomas Chatain},
  title        = {{MCC:} {A} Tool for Unfolding Colored Petri Nets in {PNML} Format},
  booktitle    = {Application and Theory of Petri Nets and Concurrency - 41st International
                  Conference, {PETRI} {NETS} 2020, Paris, France, June 24-25, 2020,
                  Proceedings},
  series       = {Lecture Notes in Computer Science},
  volume       = {12152},
  pages        = {426--435},
  publisher    = {Springer},
  year         = {2020},
  doi          = {10.1007/978-3-030-51831-8\_23},
  bibsource    = {dblp computer science bibliography, https://dblp.org}
}

@inproceedings{DBLP:conf/apn/Aalst21,
  author       = {Wil M. P. van der Aalst},
  editor       = {Didier Buchs and
                  Josep Carmona},
  title        = {Reduction Using Induced Subnets to Systematically Prove Properties
                  for Free-Choice Nets},
  booktitle    = {Application and Theory of Petri Nets and Concurrency - 42nd International
                  Conference, {PETRI} {NETS} 2021, Virtual Event, June 23-25, 2021,
                  Proceedings},
  series       = {Lecture Notes in Computer Science},
  volume       = {12734},
  pages        = {208--229},
  publisher    = {Springer},
  year         = {2021},
  doi          = {10.1007/978-3-030-76983-3\_11},
  bibsource    = {dblp computer science bibliography, https://dblp.org}
}

@article{DBLP:journals/algorithms/ZelstL20,
  author       = {Sebastiaan J. van Zelst and
                  Sander J. J. Leemans},
  title        = {Translating Workflow Nets to Process Trees: An Algorithmic Approach},
  journal      = {Algorithms},
  volume       = {13},
  number       = {11},
  pages        = {279},
  year         = {2020},
  doi          = {10.3390/A13110279},
  bibsource    = {dblp computer science bibliography, https://dblp.org}
}

@article{DBLP:journals/infsof/AalstL08,
  author       = {Wil M. P. van der Aalst and
                  Kristian Bisgaard Lassen},
  title        = {Translating unstructured workflow processes to readable {BPEL:} Theory
                  and implementation},
  journal      = {Inf. Softw. Technol.},
  volume       = {50},
  number       = {3},
  pages        = {131--159},
  year         = {2008},
  doi          = {10.1016/J.INFSOF.2006.11.004},
  bibsource    = {dblp computer science bibliography, https://dblp.org}
}

@inproceedings{DBLP:conf/otm/LassenA06,
  author       = {Kristian Bisgaard Lassen and
                  Wil M. P. van der Aalst},
  editor       = {Robert Meersman and
                  Zahir Tari},
  title        = {{WorkflowNet2BPEL4WS}: {A} Tool for Translating Unstructured Workflow
                  Processes to Readable {BPEL}},
  booktitle    = {On the Move to Meaningful Internet Systems 2006: CoopIS, DOA, GADA,
                  and ODBASE, {OTM} Confederated International Conferences, CoopIS,
                  DOA, GADA, and {ODBASE} 2006, Montpellier, France, October 29 - November
                  3, 2006. Proceedings, Part {I}},
  series       = {Lecture Notes in Computer Science},
  volume       = {4275},
  pages        = {127--144},
  publisher    = {Springer},
  year         = {2006},
  doi          = {10.1007/11914853\_9},
  bibsource    = {dblp computer science bibliography, https://dblp.org}
}

@article{DBLP:journals/infsof/DijkmanDO08,
  author       = {Remco M. Dijkman and
                  Marlon Dumas and
                  Chun Ouyang},
  title        = {Semantics and analysis of business process models in {BPMN}},
  journal      = {Inf. Softw. Technol.},
  volume       = {50},
  number       = {12},
  pages        = {1281--1294},
  year         = {2008},
  doi          = {10.1016/J.INFSOF.2008.02.006},
  bibsource    = {dblp computer science bibliography, https://dblp.org}
}

@article{Polyvyanyy2023,
    author = "Artem Polyvyanyy",
    title = "{Sound Workflow Systems Used in PQL Evaluations}",
    year = "2023",
    month = "1",
    opturl = "https://figshare.unimelb.edu.au/articles/dataset/Sound_Workflow_Systems_Used_in_PQL_Evaluations/21937259",
    doi = "10.26188/21937259.v1"
}

@book{10.5555/265614,
    author = {Curran, Thomas and Keller, Gerhard and Ladd, Andrew},
    title = {SAP R/3 business blueprint: understanding the business process reference model},
    year = {1997},
    isbn = {0135211476},
    publisher = {Prentice-Hall, Inc.},
    address = {USA}
}

@article{DBLP:journals/is/PolyvyanyyHROP24,
  author       = {Artem Polyvyanyy and
                  Arthur H. M. ter Hofstede and
                  Marcello La Rosa and
                  Chun Ouyang and
                  Anastasiia Pika},
  title        = {Process Query Language: Design, Implementation, and Evaluation},
  journal      = {Inf. Syst.},
  volume       = {122},
  pages        = {102337},
  year         = {2024},
  doi          = {10.1016/J.IS.2023.102337},
  bibsource    = {dblp computer science bibliography, https://dblp.org}
}

@inproceedings{DBLP:conf/bpm/RochaA23,
  author       = {Eduardo Goulart Rocha and
                  Wil M. P. van der Aalst},
  editor       = {Chiara Di Francescomarino and
                  Andrea Burattin and
                  Christian Janiesch and
                  Shazia Sadiq},
  title        = {Polynomial-Time Conformance Checking for Process Trees},
  booktitle    = {Business Process Management - 21st International Conference, {BPM}
                  2023, Utrecht, The Netherlands, September 11-15, 2023, Proceedings},
  series       = {Lecture Notes in Computer Science},
  volume       = {14159},
  pages        = {109--125},
  publisher    = {Springer},
  year         = {2023},
  opturl          = {https://doi.org/10.1007/978-3-031-41620-0\_7},
  doi          = {10.1007/978-3-031-41620-0\_7},
  bibsource    = {dblp computer science bibliography, https://dblp.org}
}

@article{DBLP:journals/cj/PolyvyanyyGFW14,
  author       = {Artem Polyvyanyy and
                  Luciano Garc{\'{\i}}a{-}Ba{\~{n}}uelos and
                  Dirk Fahland and
                  Mathias Weske},
  title        = {Maximal Structuring of Acyclic Process Models},
  journal      = {Comput. J.},
  volume       = {57},
  number       = {1},
  pages        = {12--35},
  year         = {2014},
  doi          = {10.1093/COMJNL/BXS126},
  bibsource    = {dblp computer science bibliography, https://dblp.org}
}

@inproceedings{DBLP:conf/apn/EsparzaS89,
  author       = {Javier Esparza and
                  Manuel Silva Su{\'{a}}rez},
  editor       = {Grzegorz Rozenberg},
  title        = {Circuits, handles, bridges and nets},
  booktitle    = {Advances in Petri Nets 1990 [10th International Conference on Applications
                  and Theory of Petri Nets, Bonn, Germany, June 1989, Proceedings]},
  series       = {Lecture Notes in Computer Science},
  volume       = {483},
  pages        = {210--242},
  publisher    = {Springer},
  year         = {1989},
  doi          = {10.1007/3-540-53863-1\_27},
  bibsource    = {dblp computer science bibliography, https://dblp.org}
}

\end{document}